\crefname{theorem}{Theorem}{Theorems}
\Crefname{lemma}{Lemma}{Lemmas}
\Crefname{figure}{Figure}{Figures}
\newtheorem{theorem}{Theorem}
\newtheorem{lemma}{Lemma}
\newtheorem{claim}{Claim}
\newtheorem{definition}{Definition}
\newtheorem{corollary}{Corollary}
\newcommand{\ceil}[1]{\left\lceil #1 \right\rceil}
\newcommand{\strongcd}{\mathsf{Strong}\text{-}\mathsf{CD}}
\newcommand{\sendercd}{\mathsf{Sender}\text{-}\mathsf{CD}}
\newcommand{\receivercd}{\mathsf{Receiver}\text{-}\mathsf{CD}}
\newcommand{\nocd}{\mathsf{No}\text{-}\mathsf{CD}}
\newcommand{\idle}{\mathsf{idle}}
\newcommand{\listen}{\mathsf{listen}}
\newcommand{\transmit}{\mathsf{transmit}}
\newcommand{\silence}{\mathsf{silence}}
\newcommand{\noise}{\mathsf{collision}}
\newcommand{\ID}{\mathsf{ID}}
\newcommand{\rank}{\mathsf{rank}}
\newcommand{\poly}{\operatorname{poly}}
\newcommand{\LOCAL}{\mathsf{LOCAL}}
\newcommand{\BC}{T_\mathsf{BC}}
\title{Near-Optimal Time-Energy Trade-Offs \\ for Deterministic Leader Election}
\author{
Yi-Jun Chang\thanks{Supported by  Dr.~Max R\"{o}ssler, by the Walter Haefner Foundation, and by the ETH Z\"{u}rich Foundation.}\\
\small ETH Z\"{u}rich \\
\and
Ran Duan\thanks{
Supported
by the Zhongguancun Haihua Institute for Frontier Information Technology.}\\
\small Tsinghua University \\
\and
Shunhua Jiang\thanks{Supported by NSF CAREER award CCF-1844887.}\\
\small Columbia University  
 }
\begin{document}
\date{}
\maketitle
\thispagestyle{empty}
\setcounter{page}{0}

\begin{abstract}
     We consider the energy complexity of the {leader election} problem in the single-hop radio network model, where each device $v$ has a unique identifier $\ID(v) \in \left\{1, 2, \ldots, N\right\}$.
Energy is a scarce resource for small battery-powered devices. For such devices, most of the energy is often spent on communication, not on computation. To approximate the actual energy cost, the energy complexity of an algorithm is defined as the maximum over all devices of the number of time slots where the device transmits or listens. 
      
Much progress has been made in understanding the energy complexity of leader election in radio networks, but very little is known about the trade-off between time and energy.  Chang~et~al.~{[}STOC 2017{]} showed that the optimal deterministic energy complexity of leader election is  $\Theta(\log \log N)$ if each device can simultaneously transmit and listen, but still leaving the problem of determining the optimal time complexity under any given energy constraint.

\begin{description}
     \item[Time-energy trade-off:] 
     For any $k \geq \log \log N$, we show that a leader among at most $n$ devices can be elected deterministically in $O(k \cdot n^{1+\epsilon}) + O(k \cdot N^{1/k})$ time and $O(k)$ energy if each device can simultaneously transmit and listen, where $\epsilon > 0$ is any small constant. This improves upon the previous $O(N)$-time  $O(\log \log N)$-energy algorithm by Chang~et~al.~{[}STOC 2017{]}. We  provide lower bounds to show that the time-energy trade-off of our algorithm is near-optimal.
     \item[Dense instances:] For the dense instances where the number of devices is $n = \Theta(N)$, we design a deterministic leader election algorithm using only $O(1)$ energy.
     This improves upon the $O(\log^\ast N)$-energy algorithm by Jurdzi\'{n}ski, Kuty\l{}owski, and Zatopia\'{n}ski~{[}PODC 2002{]} and the $O(\alpha(N))$-energy algorithm by Chang et al.~{[}STOC 2017{]}.  More specifically, we show that the optimal deterministic energy complexity of leader election is $\Theta\left(\max\left\{1, \log \frac{N}{n}\right\}\right)$ if each device cannot simultaneously transmit and listen, and it is  $\Theta\left(\max\left\{1, \log \log \frac{N}{n}\right\}\right)$ if each device can simultaneously transmit and listen.
\end{description}

\end{abstract}


\newpage

\section{Introduction}\label{sect-intro}

A \emph{radio network} is a distributed system of devices equipped
with radio transceivers. In this paper, we focus on \emph{single-hop} networks where all devices communicate via  a single communication channel. The communication proceeds in synchronous rounds. In each time slot, each device performs some computation and chooses to either 
 listen to the channel,  transmit a message, or stay idle.

The \emph{energy complexity} of a distributed algorithm is the  maximum over all devices of the number of time slots where the device is not idle~\cite{ChangKPWZ17,Chang18broadcast,Chang20bfs,JurdzinskiKZ02podc,JurdziskiKZ03}. That is, each transmitting and listening time slot costs one unit of energy.
This complexity measure is motivated by the fact that energy is a scarce resource for small battery-powered devices, and for these devices a large fraction of energy is often spent on communication, not on computation.

\paragraph{Collision detection.} We consider the following four collision detection models~\cite{ChangKPWZ17}, depending on whether the devices have \emph{sender-side} collision detection (i.e., devices can simultaneously transmit and listen) and \emph{receiver-side} collision detection (i.e., devices can distinguish between collision and silence).

\begin{description}
\item[$\strongcd$.] Transmitters and listeners receive one of the three  feedback:
(i)~silence, if zero devices transmit, 
(ii)~collision, if at least two devices transmit, or
(iii)~a message $m$, if exactly one device transmits.

\item[$\sendercd$.] Transmitters and listeners receive one of the two   feedback:
(i)~silence, if zero or at least two devices transmit, or
(ii)~a message $m$, if exactly one device transmits.

\item[$\receivercd$.] Transmitters receive no feedback.  Listeners receive one of the three  feedback:
(i)~silence, if zero devices transmit, 
(ii)~collision, if at least two devices transmit, or
(iii)~a message $m$, if exactly one device transmits.

\item[$\nocd$.] Transmitters receive no feedback. Listeners receive one of the two   feedback:
(i)~silence, if zero or at least two devices transmit, or
(ii)~a message $m$, if exactly one device transmits.
\end{description}

\paragraph{Deterministic leader election.} Through the paper, a single-hop radio network is described by a set  of devices $V$, where each device $v \in V$ has a distinct identifier $\ID(v) \in [N] = \left\{1, 2, \ldots, N\right\}$. The parameter $N$ is a global knowledge.
The goal of the \emph{leader election} problem is to have exactly one device   identify itself as the \emph{leader} and  all other devices identify themselves as \emph{non-leaders}.

\paragraph{A simple $O(N)$-time and $O(\log N)$-energy algorithm in $\nocd$.}
It is well-known that leader election can be solved in $O(N)$ time and $O(\log N)$ energy in the $\nocd$ model~\cite{nakano2000randomized}, as follows.
The set $S$ initially contains all devices $u$ whose $\ID(u)$ is an odd number.
For $i= 1, 2, \ldots, \ceil{N/2}$, in the $i$th time slot, the device $u$  with $\ID(u) = 2i-1$  transmits a dummy message,  the device $v$ with $\ID(v) = 2i$ listens, and $v$ adds itself to $S$ if $v$ \emph{does not hear} a message from $u$.
It is clear that  if there is \emph{at least one} device $w$ with $\ID(w) \in\left\{2i-1, 2i\right\}$, then \emph{exactly one} such device is in $S$. We can solve the leader election problem by having each $w \in S$ reset its identifier $\ID(w) \leftarrow \ceil{\ID(w) / 2}$ and recursing on the devices in $S$ with the new $\ID$ space $\left\{1, 2, \ldots,\ceil{N/2}\right\}$. 

The time complexity $T(N)$ and the energy complexity $E(N)$ of the algorithm satisfies the following recurrence relations.
\begin{align*}
    T(N) &= T\left(\ceil{N/2}\right) + \ceil{N/2}, & E(N) &= E\left(\ceil{N/2}\right) + 1,\\
    T(1) &= 0, & E(1) &= 0.&
\end{align*}
Hence $T(N) = O(N)$ and $E(N) = O(\log N)$.

This algorithm is \emph{optimal} in both time and energy in the $\nocd$ model, as
Jurdzi\'{n}ski, Kuty\l{}owski, and Zatopia\'{n}ski showed that leader election requires $\Omega(N)$   time~\cite{JurdzinskiKZ02podc} and $\Omega(\log N)$ energy~\cite{JurdziskiKZ03} in  the $\nocd$ model.

\paragraph{Other collision detection models.} For the energy complexity in the $\sendercd$ model, Jurdzi\'{n}ski, Kuty\l{}owski, and Zatopia\'{n}ski proved an  upper bound $O\left(\sqrt{\log N}\right)$~\cite{JurdzinskiKZ02podc} and a lower bound  $\Omega\left(\log \log N / \log \log \log N\right)$~\cite{JurdziskiKZ03}. Later, Chang et al.~\cite{ChangKPWZ17} settled the optimal  energy complexity of leader election in all four models by showing that it is $\Theta(\log N)$ if each device cannot simultaneously transmit and listen (i.e., $\receivercd$ and $\nocd$), and it is  $\Theta(\log \log N)$  if each device can simultaneously transmit and listen (i.e., $\strongcd$ and $\sendercd$).

\paragraph{Time-energy trade-off.}  
To the best of our knowledge, the only existing result relevant to the  trade-off between time and energy is that 
 leader election can be solved in $O(\log N)$ time and  $O(\log N)$ energy in $\receivercd$ and $\strongcd$~\cite{capetanakis1979tree,hayes1978adaptive,tsybakov1978free} by doing a binary search over the ID space $[N]$. For comparison, a much better energy complexity of $O(\log \log N)$ can be achieved in $\strongcd$ with an algorithm that has a worse time complexity of $O(N)$~\cite{ChangKPWZ17}.
 
The algorithm works as follows. In one round of communication, the size of the ID space can be reduced to $\ceil{N/2}$, as follows. Let $L$ be the set of devices $v$ with $\ID(v) \leq \ceil{N/2}$, and let $R$ be the remaining devices. The devices in $L$ transmit a dummy message at the same time, and the devices in $R$ listen. If the devices in $R$ hear silence, then we recurse on $R$, since this means the set $L$ is empty. If the devices in $R$ 
detect collision or receive a message, then we recurse on $L$, since this means the set $L$ is non-empty. A leader is elected after $O(\log N)$ depths of recursion.

This algorithm can be  generalized to obtain a time-energy trade-off in $\strongcd$.
Consider any $k  \geq \log \log N$. Apply the above strategy for $k$ iterations to reduce the size of the ID space to  $N' = O\left(\frac{N}{2^k} \right)$, and then run the $O(N')$-time $O(\log \log N')$-energy leader election algorithm of~\cite{ChangKPWZ17}.
The overall time complexity  is $O\left(k + \frac{N}{2^k} \right)$, and the overall energy complexity is $O(k)$.

\paragraph{Dense instances.} The $\Omega(\log N)$ and $\Omega(\log \log N)$ energy lower bounds of Chang et al.~\cite{ChangKPWZ17} only work for the special case of $|V|=2$. Much more energy-efficient leader election algorithms are known for  \emph{dense} instances. When the number of devices is $|V| = \Theta(N)$, Jurdzi\'{n}ski, Kuty\l{}owski, and {Zatopia\'{n}ski}~\cite{JurdzinskiKZ02podc} showed an $O(\log^\ast N)$-energy algorithm, and the energy complexity was later improved to the inverse Ackermann function $O(\alpha(N))$ by Chang et al.~\cite{ChangKPWZ17}.

We briefly explain the ideas underlying these algorithms. In the above simple $O(N)$-time and $O(\log N)$-energy algorithm, by adding an extra time slot for each $i$ to let the device $v$ with $\ID(v) = 2i$  transmit an acknowledgement to the device $u$ with $\ID(u) = 2i-1$,   we can let each device $v \notin S$ that drops out of the algorithm be \emph{remembered} by some device $u \in S$. In particular, as we have $|V| = \Theta(N)$,  after $t$ levels of recursion, most of the remaining devices remember $\Omega\left(2^t\right)$ devices.
Let $B_v$ denote the set of devices remembered by $v$, and let $v$ drop out of the algorithm if $|B_v| = o(2^t)$ is too small. 
The key idea  is that $v$ can let the members in the group $B_v$  share the energy cost in subsequent iterations. As $|B_v| = \Omega\left(2^t\right)$, we can execute $2^t$ levels of recursion with energy cost $O(1)$ per device. After that, most of the remaining devices form groups of size $\Omega\left(2^{2^t}\right)$, and hence the next $2^{2^t}$ levels of recursion  can be executed with energy cost $O(1)$ per device. Repeating this argument, leader election can be solved in $O(\log^\ast N)$ energy. 

Replacing the  $O(\log N)$-energy algorithm  by the new $O(\log^\ast N)$-energy algorithm in the above approach, an algorithm with energy complexity  $O(\log^{\ast \ast} N)$ is obtained, where $\log^{\ast \ast} N$ is defined by $\log^{\ast \ast} 1 = 0$ and $\log^{\ast \ast} n = 1 + \log^{\ast \ast} \log^\ast n$ for $n > 1$. Similarly, the energy complexity can be improved to $O(\log^{\ast \ast \ast} N)$.   Repeating this argument, a leader election algorithm with inverse Ackermann energy complexity
\[
O(\alpha(N)) = O\left( \min \left\{ i \in \mathbb{N} \ \middle| \ \log^{  \overbrace{\ast \ast  \cdots \ast}^{\text{$i$ times}}} N \leq 1 \right\}\right)
\]
is obtained.
Due to the complex recursive structure of this approach, the algorithms~\cite{ChangKPWZ17,JurdzinskiKZ02podc} based on this approach are naturally very complicated.

\paragraph{Summary.}   See \cref{tab:old} for a summary of existing time and energy bound for deterministic leader election. For more related work, see \cref{sec:related}.  
In \cref{tab:old,tab:tradeoff,tab:dense}, network size refers to the prior knowledge of the devices about the size of the network $|V|$.  For example, $|V|=n$ means that there is a prior knowledge that $n$ is the number of devices, $n \leq |V| \leq N$ means that there is a prior knowledge that $n$ is a lower bound on the number of devices, and $1 \leq |V| \leq N$ means that there is no prior knowledge on $|V|$ at all, except for the trivial lower bound $1$ and the trivial upper bound $N$. 
Note that a lower bound that works with a prior knowledge $|V|=n$ is strictly stronger than one that works with a prior knowledge $n \leq |V| \leq N$.

Lower bounds for stronger models immediately apply to weaker models. For example, the $\Omega(\log \log N)$-energy lower bound applies  not only to $\strongcd$ but also to $\sendercd$. Similarly, upper bounds for weaker models immediately apply to stronger models. For example, the $O(N)$-time and $O(\alpha(N))$-energy algorithm for the dense instances $|V| = \Theta(N)$ works in all four models. 

The $\Omega(n)$-time $\sendercd$ lower bound in \cref{tab:old} was originally stated as an $\Omega(N)$-time
lower bound for algorithms that work for all possible network size $1 \leq |V| \leq N$. Such a lower bound immediately implies an $\Omega(n)$-time lower bound for algorithms that work for the range of network size $1 \leq |V| \leq n$ by restricting the ID space from $[N]$ to $[n]$.

\begin{table*}[!ht]
\fontsize{8}{9}\selectfont 
\centering
\resizebox{\linewidth}{!}{
    \begin{tabular}{|l |l | l | l | l | m{3.8cm} |}
\multicolumn{1}{l}{\bf Model} & 
\multicolumn{1}{l}{\bf Time} & 
\multicolumn{1}{l}{\bf Energy} & 
\multicolumn{1}{l}{\bf Network size} &
\multicolumn{1}{l}{\bf Note} &
\multicolumn{1}{l}{\bf Reference}
\\ \hline
\multirow{2}{*}{$\strongcd$} & $O\left(k + \frac{N}{2^k}\right)$ & $ O(k)$  & $1 \leq |V| \leq N$ &  $k \geq \log \log N$ & \cite{capetanakis1979tree,hayes1978adaptive,tsybakov1978free} + \cite{ChangKPWZ17}
\\ \cline{2-6}
& any & $ \Omega(\log \log N)$  & $|V| = 2$ & & \cite{ChangKPWZ17}
\\ \hline
\multirow{2}{*}{$\sendercd$} & $O(N)$ & $ O(\log \log N)$ & $1 \leq |V| \leq N$ & & \cite{ChangKPWZ17}
\\ \cline{2-6}
& $\Omega(n)$ & any & $1 \leq |V| \leq n$ & & \cite{JurdzinskiKZ02podc}
\\ \hline 
\multirow{2}{*}{$\receivercd$} & $O(\log N)$ & $ O(\log N)$ & $1 \leq |V| \leq N$ & & \cite{capetanakis1979tree,hayes1978adaptive,tsybakov1978free}
\\ \cline{2-6}
& any & $ \Omega(\log N)$ & $|V| = 2$ & & \cite{ChangKPWZ17,JurdziskiKZ03}
\\ \hline
\multirow{2}{*}{$\nocd$} & $O(N)$ & $ O(\log N)$ & $1 \leq |V| \leq N$ & & \cite{nakano2000randomized}
\\ \cline{2-6}
& $O(N)$ & $ O(\alpha(N))$ & $\epsilon N \leq |V| \leq N$ & $\epsilon = \Theta(1)$ & \cite{ChangKPWZ17}
\\ \hline
    \end{tabular}
    }
    \caption{Summary of existing time and energy bounds}
    \label{tab:old}
\end{table*}

\subsection{Our contribution}  

The contribution of this paper is as follows.

\paragraph{Time-energy trade-off.} 
In this paper, we show a near-optimal time-energy trade-off for deterministic leader election in $\strongcd$ and $\sendercd$.
 For any $k \geq \log \log N$, we show that a leader among at most $n$ devices can be elected deterministically in $O(k n^{1+\epsilon}) + O(k N^{1/k})$ time and $O(k)$ energy in $\sendercd$, where $\epsilon > 0$ is an arbitrarily small constant. Our algorithm requires that all devices know the parameters $k$ and $\epsilon$. This improves upon the previous $O(N)$-time  $O(\log \log N)$-energy algorithm in~\cite{ChangKPWZ17}. 
 For the case of $\strongcd$, the time complexity can be improved to $O\left(k N^{1/k} + \min\left\{k n^{1+\epsilon}, \frac{N}{2^k}\right\}\right)$ by combining our new algorithm with the  $O\left(k + \frac{N}{2^k}\right)$-time and $O(k)$-energy algorithm that we discussed above.

We also provide lower bounds to show that the time-energy trade-off of our algorithm is near-optimal.
See \cref{tab:tradeoff} for a summary of the results, where $\epsilon > 0$ can be any arbitrarily small constant. The $\Omega(n)$ lower bound of~\cite{JurdzinskiKZ02podc} is also included in the table for showing the near-optimality of our $O\left(kN^{1/k} + k n^{1+\epsilon}\right)$ upper bound in $\sendercd$.
Note that any $\strongcd$ lower bound applies to all four models. The $\Omega(k N^{1/k})$-time lower bound also applies to algorithms that work for the network sizes $1 \leq |V| \leq n$, as the condition $|V| = 2$ is more restricted than the condition $1 \leq |V| \leq n$. 

\begin{table*}[!ht]
\fontsize{8}{9}\selectfont
\centering
\resizebox{\linewidth}{!}{
    \begin{tabular}{|l |l | l | l | l | l |}
\multicolumn{1}{l}{\bf Model} & 
\multicolumn{1}{l}{\bf Time} & 
\multicolumn{1}{l}{\bf Energy} & 
\multicolumn{1}{l}{\bf Network size} &
\multicolumn{1}{l}{\bf Note} &
\multicolumn{1}{l}{\bf Reference}
\\ \hline
\multirow{3}{*}{$\strongcd$} & $O\left(kN^{1/k} + \min\left\{k n^{1+\epsilon}, \frac{N}{2^k}\right\}\right)$ & $ O(k)$  & $1 \leq |V| \leq n$ &  $k \geq \log \log N$ & \cref{thm:timeub}
\\ \cline{2-6}
  & $\Omega(k N^{1/k})$ & $k$ & $|V|=2$ & & \cref{thm:timelb2}
\\ \cline{2-6}
& $\Omega\left( \min\left\{n, \frac{N}{2^k}\right\}\right)$ & $k$  &  $1 \leq |V| \leq n$ & & \cref{thm:timelb1}
\\ \hline
\multirow{2}{*}{$\sendercd$} & $O\left(kN^{1/k} + k n^{1+\epsilon}\right)$ & $O(k)$ & $1 \leq |V| \leq n$ & $k \geq \log \log N$ & \cref{thm:timeub}
\\ \cline{2-6}
& $\Omega(n)$ & any & $1 \leq |V| \leq n$ & & \cite{JurdzinskiKZ02podc}
\\ \hline 
    \end{tabular}
    }
    \caption{New results for time-energy trade-off}
    \label{tab:tradeoff}
\end{table*}

\paragraph{Dense instances.} 
In this paper, we show that leader election  for dense instances $|V| = \Theta(N)$ can be solved in $O(N)$ time and $O(1)$ energy in $\nocd$, improving upon the $O(\alpha(N))$-energy upper bound given in the prior work~\cite{ChangKPWZ17,JurdzinskiKZ02podc}. 
Moreover, our algorithm can be extended to work for all possible network sizes $1 \leq |V| \leq N$, and no prior knowledge of $|V|$ is required.

We also provide energy lower bounds \emph{matching} our energy upper bounds in all four models. More specifically, we show that for algorithms that work for  all possible network size $1 \leq |V| \leq N$ and require no prior knowledge on $n = |V|$, the  \emph{optimal} 
 deterministic energy complexity of leader election is $\Theta\left(\max\left\{1, \log \frac{N}{n}\right\}\right)$ in $\receivercd$ and $\nocd$, and it is $\Theta\left(\max\left\{1, \log \log \frac{N}{n}\right\}\right)$ in $\strongcd$ and $\sendercd$.
 
 See \cref{tab:dense} for a summary of our results. Note that lower bounds for algorithms that work for a more restricted range of network sizes also apply for algorithms that work for a less restricted range of network sizes. In particular, both of our $\Omega\left(\log \frac{N}{n}\right)$ and $\Omega\left(\log \log \frac{N}{n}\right)$ energy lower bounds apply to the setting where the algorithm  works for  all possible network size $1 \leq |V| \leq N$ and require no prior knowledge on $n = |V|$.

\begin{table*}[!ht]
\fontsize{8}{9}\selectfont
\centering
\resizebox{\linewidth}{!}{
    \begin{tabular}{|l |l | l | l | l | l |}
\multicolumn{1}{l}{\bf Model} & 
\multicolumn{1}{l}{\bf Time} & 
\multicolumn{1}{l}{\bf Energy} & 
\multicolumn{1}{l}{\bf Network size} &
\multicolumn{1}{l}{\bf Note} &
\multicolumn{1}{l}{\bf Reference}
\\ \hline
\multirow{2}{*}{$\{\strongcd, \sendercd\}$} & $O(N)$ & $O\left(\max\left\{1, \log \log \frac{N}{n}\right\}\right)$  & $1 \leq |V| \leq N$ &  $n  = |V|$ & \cref{thm:ub-main}
\\ \cline{2-6}
& any & $\Omega\left(\log \log \frac{N}{n}\right)$  & $n \leq |V| \leq N$ &   & \cref{thm:energy_lb_log_log_N_n}
\\ \hline
\multirow{2}{*}{$\{\receivercd, \nocd\}$} & $O(N)$ & $O\left(\max\left\{1, \log \frac{N}{n}\right\}\right)$ & $1 \leq |V| \leq N$ & $n = |V|$ & \cref{thm:ub-main}
\\ \cline{2-6}
& any & $\Omega\left(\log \frac{N}{n}\right)$ & $|V| = n$  &  $2 \leq n \leq N-1$ & \cref{thm:energy_lb_log_N_n}
\\ \hline
    \end{tabular}
    }
    \caption{New results for dense instances}
    \label{tab:dense}
\end{table*}

\subsection{Related Work}\label{sec:related}

We give a brief overview of some additional related work. In this section $n$ denotes the number of devices in the network.

\paragraph{Randomized setting.}
For the  randomized time complexity of leader election,  Willard proved that $\Theta(\log\log n)$ time is necessary and sufficient
for leader election in the $\strongcd$   model with constant success probability~\cite{Willard86}. More generally, the optimal time complexity is $\Theta(\log\log n + \log f^{-1})$ if the maximum allowed failure probability is $f$~\cite{nakano2002uniform,Newport14,Willard86}.

In the $\sendercd$ model,   leader election can be solved in $O(\log  n \log f^{-1})$ time with success probability $1-f$~\cite{JurdzinskiS02,Newport14}. An $\Omega(\log^2  n)$ time lower bound is known for the case $f = 1/n$~\cite{Newport14}. An $\Omega(\log  n \log f^{-1})$ time lower bound  is known~\cite{Farach-ColtonFM06} for the case of \emph{uniform} algorithms,  in the sense that for each time slot $\tau$, there is a transmitting probability $p_{\tau}$ such that each device transmits in time slot $\tau$ with probability $p_{\tau}$ using fresh randomness independently. 

Chang et al.~\cite{ChangKPWZ17} proved that for randomized polynomial-time leader election with success probability $1 - 1/\poly(n)$, the optimal energy complexity is
$\Theta(\log \log^\ast n)$  in $\{\strongcd$, $\receivercd\}$, and it is $\Theta(\log^\ast n)$  in  $\{\sendercd$, $\nocd\}$. 
There is a tradeoff between time and energy. For instance, in the $\nocd$ model, with $O\left(\log^{2+\epsilon} n\right)$
time leader election can be solved in $O\left(\epsilon^{-1}\log\log\log n\right)$ energy, with success probability $1 - 1/\poly(n)$.  In the setting considered in~\cite{ChangKPWZ17},     devices may consume unbounded energy and never
halt in a failed execution. 

Andriambolamalala 
and Ravelomanana~\cite{Andriambolamalala2020} considered the setting where either $n$ or $\log n$ is known, and they designed randomized $O(1)$-energy and $\log^{O(1)} n$-time algorithms for leader election.  

\paragraph{Multi-hop networks.}
For \emph{multi-hop} radio networks, Bar-Yehuda, Goldreich,  and Itai~\cite{bar1991efficient} showed that leader election can be solved by emulating known leader election algorithms in single-hop networks using a \emph{broadcasting} algorithm as a communication primitive. By emulating Willard's algorithm~\cite{Willard86}, they showed that leader election can be solved in $O(\BC \log n)$ time with success probability $1 - 1/\poly(n)$ and in $O(\BC \log \log n)$ time in expectation, where $\BC$ is the time complexity for broadcasting a single message from multiple sources with success probability $1 - 1/\poly(n)$ in the $\nocd$ model.

The seminal \emph{decay} algorithm of Bar-Yehuda, Goldreich,  and Itai~\cite{bar1992time} solves the broadcasting problem 
in $\BC = O\left(D\log n + \log^2 n\right)$ time, where $D$ is the diameter of the network. 
This bound was later improved to $\BC = O\left(D \log \frac{n}{D} + \log^2 n\right)$~\cite{CR06,KowalskiP05}, which is optimal in view of the  $\Omega(\log^2 n)$ lower bound of Alon et al.~\cite{alon1991lower} and the $\Omega\left(D\log\frac{n}{D}\right)$ lower bound of Kushilevitz and  Mansour~\cite{KushilevitzM98}, but the $\Omega\left(D\log\frac{n}{D}\right)$ lower bound only holds in the setting where \emph{spontaneous transmission} is not allowed, i.e.,  devices that have not yet received a message are forbidden to transmit.

Ghaffari and Haeupler~\cite{GhaffariH13} showed that leader election in a multi-hop radio network can be solved in  $O\left(D \log \frac{n}{D} + \log^3 n\right) \cdot \min\left\{\log \log n, \log \frac{n}{D}\right\}$ time with success probability $1 - 1/\poly(n)$ in the $\nocd$ model, improving the previous bound $O(\BC \log n)$ of~\cite{bar1991efficient}.

Haeupler and Wajc~\cite{haeupler2016faster} showed that the 
lower bound   $\Omega\left(D\log\frac{n}{D}\right)$ of~\cite{KushilevitzM98} can be circumvented when spontaneous transmission is allowed, i.e.,  devices can transmit in any time slot. They showed a broadcasting algorithm with time complexity $O\left( D \frac{\log n \log \log n}{\log D}\right)  + \log^{O(1)} n $ with success probability  $1 - 1/\poly(n)$ in the $\nocd$ model.   Czumaj and  Davies~\cite{CzumajD17} improved this bound to $O\left( D \frac{\log n }{\log D}\right)  + \log^{O(1)} n $, and they showed that leader election can be solved with the same time bound.

Chang et al.~\cite{Chang18broadcast,Chang20bfs} studied the energy complexity in multi-hop radio networks with spontaneous transmission. They showed that broadcasting can be solved in   $O(n \log^2 n \log \Delta)$  time and   $O(\log^2 n \log \Delta)$ energy with success probability $1 - 1/\poly(n)$ in the $\nocd$ model, where $\Delta$ is the maximum degree of the network~\cite{Chang18broadcast}. They also showed that   \emph{breadth first search} can be solved in   $O(D) \cdot n^{o(1)}$ time and   $n^{o(1)}$ energy  with success probability $1 - 1/\poly(n)$ in the $\nocd$ model~\cite{Chang20bfs}. The energy complexity of maximal matching in multi-hop radio networks was recently studied by Dani~et~al.~\cite{dani2021wake}.

Energy complexity has  recently  been studied in the $\LOCAL$ model, where each device can send a separate message to each of its neighbors in each time slot. Chatterjee,   Gmyr,  and Pandurangan~\cite{ChatterjeeGP20} showed that a \emph{maximal independent set} can be computed in $O\left(\log^{3.41} n\right)$ time with $O(1)$ \emph{average} energy cost per device, with success probability $1 - 1/\poly(n)$, in the $\LOCAL$ model. 

\section{Time-Energy Trade-Off}

In this section, we present a leader election algorithm in the $\sendercd$ model that uses $O(k)$ energy and runs in $O(k N^{1/k} + k n^{1+ \epsilon})$ time for any constant $\epsilon >0$, and for any $k \geq \log \log N$. In \cref{thm:timeub}, we assume that $n$, $k$, and $\epsilon$ are global knowledge.

\begin{theorem}\label{thm:timeub}
Let $N$ be the size of the ID space. Suppose that $n \geq |V|$ is a known upper bound on the number of devices. Assume that $k \geq \log \log N$, and $\epsilon >0$ is any constant.
There is a deterministic leader election algorithm in the $\sendercd$ model with time complexity $T = O(kN^{1/k} + k n^{1+\epsilon})$ and energy complexity $E  = O(k)$.
\end{theorem}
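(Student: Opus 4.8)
The plan is to reduce \cref{thm:timeub} to the known base algorithm of Chang et al.~\cite{ChangKPWZ17}, which elects a leader in an ID space of size $M$ using $O(M)$ time and $O(\log\log M)$ energy in $\sendercd$. Since we are given $k \ge \log\log N$, invoking the base algorithm on a space of size $M = O(n^{1+\epsilon})$ costs $O(n^{1+\epsilon})$ time and only $O(\log\log n) = O(\log\log N) = O(k)$ energy, which already accounts for the $O(k n^{1+\epsilon})$ term. Thus the entire task is to design an \emph{ID-space reduction} that, starting from $\le n$ devices with distinct identifiers in $[N]$, relabels a surviving subset into a space of size $O(n^{1+\epsilon})$ using $O(k)$ energy and $O(k N^{1/k})$ time, while preserving correctness in the sense that exactly one survivor will be elected and every device that drops out already knows it is a non-leader. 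The shape of the target bound, together with the matching lower bound $\Omega(k N^{1/k})$ for $|V| = 2$ from \cref{thm:timelb2}, points to a structure of depth $k$ and branching factor $b = N^{1/k}$: write each identifier in base $b$ as a string of $k$ digits, so that $[N]$ becomes a trie of depth $k$ and degree $b$ whose occupied leaves are exactly the present devices.

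I would process the trie level by level from the root, maintaining at each level a set of \emph{active} devices grouped by the trie node (prefix) they currently occupy, together with a \emph{representative scheme} guaranteeing that every inactive device has already certified itself a non-leader and that distinct active nodes carry distinct labels. The only symmetry-breaking tool in $\sendercd$ is \emph{singleton detection}: if the members of a set $S$ all transmit in one slot, each transmitter hears its own message when $|S| = 1$ and hears silence when $|S| \ge 2$, so in a single slot and $O(1)$ energy every device learns whether its group is a singleton. At each level I would have each active device transmit once (and listen a constant number of times), using singleton detection to separate each group into the device that is alone in its node---which settles with a fresh distinct label and stops spending energy---and the devices that still share a node, which descend to the child selected by their next digit. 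A device only keeps spending energy while it still shares its node with another device, i.e.\ along the common-prefix portion of its identifier, so it spends $O(1)$ energy at each of the at most $k$ levels and its total energy is $O(k)$; the goal is then to implement a level in $O(N^{1/k} + n^{1+\epsilon})$ time, so that summing over the $k$ levels yields the claimed $O(k N^{1/k} + k n^{1+\epsilon})$ bound.

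The main obstacle is the per-level cost of \emph{addressing} the active nodes: a device must transmit in the time slot reserved for its current node, but there are up to $b^i$ possible nodes at level $i$, and scanning all of them would cost $\Theta(N)$ time overall---exactly the bound we are trying to beat. I therefore expect the heart of the proof to be an \emph{approximate compaction} that assigns the $\le n$ active nodes distinct labels in a space of size $O(n^{1+\epsilon})$---rather than the perfectly tight $[\#\text{active}]$ that an exact rank computation would give---so that a level is executed by an interleaved scan over these $O(n^{1+\epsilon})$ slots together with the $O(N^{1/k})$ child-directions used to resolve the descent. The $\epsilon$ slack is precisely what should make such a compaction achievable with $O(1)$ energy per device, since it lets us tolerate an imperfect, collision-prone assignment instead of running a full per-group leader election, which would already cost $\Omega(\log\log b)$ energy per level and blow the $O(k)$ budget over $k$ levels. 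The remaining work is then routine: verify that the representative scheme is maintained (no nonempty group is ever emptied, dropped devices always learn their non-leader status, and active labels stay distinct), so that the base algorithm run on the final $O(n^{1+\epsilon})$-size space elects a unique global leader. Finally, for the $\strongcd$ variant one runs this algorithm together with the receiver-side-collision-detection halving algorithm achieving $O(k + N/2^k)$ time and $O(k)$ energy, halting as soon as either succeeds, which yields the stated $O\!\left(k N^{1/k} + \min\{k n^{1+\epsilon}, N/2^k\}\right)$ bound.
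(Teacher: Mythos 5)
There is a genuine gap, and you have located it yourself: the ``approximate compaction'' that assigns the $\le n$ active trie nodes distinct labels in a space of size $O(n^{1+\epsilon})$ with $O(1)$ energy per device per level is stated as an expectation, not constructed, and it is precisely the heart of the matter. It cannot be routine. If the compaction map is fixed offline (so that a device can compute its slot without communication), then it is a function from up to $b^i$ nodes into $O(n^{1+\epsilon})$ labels, and an adversary can place devices on colliding nodes at every level, merging distinct groups in every slot; singleton detection then reports ``not alone'' for devices that are in fact alone in their node, the invariant that distinct active nodes carry distinct labels fails, and no device ever settles. If instead the labels are assigned adaptively as a function of the current active set, a device cannot learn its label without communication, and any such protocol is itself a nontrivial symmetry-breaking task whose energy cost you have not bounded --- a per-group election would cost $\Omega(\log\log b)$ per level, which you correctly note blows the $O(k)$ budget. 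So as written the argument does not close.

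The paper resolves exactly this difficulty, but by a different mechanism that makes the trie descent unnecessary. Instead of maintaining and relabeling all survivors, it only needs to \emph{isolate} at least one device: \cref{lem:good_partition} shows (by the probabilistic method, via a balls-into-bins Chernoff estimate, \cref{lem:balls_in_bins}) that there exist $K = O(\tilde{\epsilon}^{-1}\log_b N)$ \emph{fixed} partitions of $[N]$ into $b$ parts such that for \emph{every} $V$ of size at most $n \le b^{1-\tilde{\epsilon}}$, some part of some partition contains exactly one device. Because the partitions are hard-coded, the schedule is known offline and the addressing problem you identified disappears: in iteration $i$, slot $j$, all of $S_j^{(i)}$ transmit, and a device alone in its part self-detects in $\sendercd$; isolated devices take their part index $j \in [b]$ as a fresh ID and run the Chang et al.\ algorithm once on ID space $[b]$, costing $O(b)$ time and $O(\log\log b)$ energy, with the loop halting at the first success (\cref{lem:timeub-aux}: $T = O(b\log_b N)$, $E = O(\log_b N + \log\log b)$). \cref{thm:timeub} then follows by choosing $b = \lceil N^{1/k}\rceil$ when $n \le (N^{1/k})^{1-\epsilon}$ and $b = \Theta(n^{1+\epsilon})$ otherwise --- note the constraint $n \le b^{1-\epsilon}$, which your fixed choice $b = N^{1/k}$ does not guarantee. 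Your correct observation that the $\epsilon$ slack is what makes a deterministic such family exist is vindicated by the lemma (the slack drives the union bound over all $\binom{N}{\tilde{n}}$ sets), but to complete your proof you would need to prove this lemma, and once you have it the level-by-level descent collapses into the paper's one-shot construction applied directly to full IDs.
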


Combining \cref{thm:timeub} with the $\strongcd$ $O\left(k + \frac{N}{2^k}\right)$-time and $O(k)$-energy algorithm in \cref{sect-intro}, we also obtain that leader election can be solved in
\begin{align*}
    &\min\left\{ O\left(k + \frac{N}{2^k}\right), \; O\left(kN^{1/k} + k n^{1+\epsilon}\right) \right\} \\
    &= O\left(kN^{1/k} + \min\left\{k n^{1+\epsilon}, \frac{N}{2^k}\right\}\right)
\end{align*}
time and $O(k)$ energy in $\strongcd$.

In \cref{lem:timeub-aux}, we assume that $b$ and $\epsilon$ are global knowledge.

\begin{lemma}\label{lem:timeub-aux}
Let $N$ be the size of the ID space. 
Let $b$ be an integer satisfying $|V| \leq b^{1-\tilde{\epsilon}}$, where $0 < \tilde{\epsilon} < 1$ is any constant.
There is a deterministic leader election algorithm in the $\sendercd$ model with time complexity $T = O(b  \cdot \log_b N )$ and energy complexity $E  = O(\log_b N + \log \log b)$.
\end{lemma}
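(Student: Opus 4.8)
The plan is to reduce the ID space by a factor of $b$ in each of $\log_b N$ phases, so that after all phases the remaining ID space has size $1$ and, because identifiers are distinct, a single device survives as the leader. Concretely, I partition the current ID space of size $M$ into $b$ consecutive blocks of size $M/b$; one phase selects a single nonempty block, every device outside it drops, and the survivors renumber their identifiers inside the selected block (a purely local computation from the fixed block boundaries), leaving an instance on an ID space of size $M/b$. A phase touches only the $b$ blocks, hence runs in $O(b)$ time independent of $M$, and the $\log_b N$ phases together take $O(b \log_b N)$ time as required. The delicate point is the energy: a naive implementation in which every device learns which block was chosen (e.g.\ a tournament or prefix computation over the $b$ blocks) costs $\Theta(\log b)$ energy per phase, i.e.\ $\Theta(\log N)$ overall, which is far too much.

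To reach energy $O(\log_b N + \log \log b)$ I would split the work into a shared $O(\log\log b)$-energy backbone plus $\log_b N$ cheap phases. The backbone is a single invocation of the base $\sendercd$ leader-election routine of \cite{ChangKPWZ17}, which operates on an ID space of size $b$ in $O(b)$ time and $O(\log\log b)$ energy; the hypothesis $|V| \le b^{1-\tilde{\epsilon}} \le b$ guarantees this coarsened instance is well defined, since the devices map faithfully into a space of size $b$. This is the sole source of the additive $O(\log\log b)$ term. Each cheap phase should then select one nonempty block using only $O(1)$ energy per device; summed over the $\log_b N$ phases this contributes the $O(\log_b N)$ term, and together with the backbone it yields $E = O(\log_b N + \log\log b)$.

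The heart of the argument is therefore the $O(1)$-energy, $O(b)$-time selection of one nonempty block among $b$ in $\sendercd$. I would aim to select the \emph{leftmost} nonempty block: in the $j$-th of $b$ slots the devices of block $j$ transmit a beacon, a device survives exactly when no block of smaller index is occupied, and all others drop. Each device transmits only in its own block's slot ($O(1)$ energy), so the entire difficulty is delivering to every device the single bit ``is some earlier block occupied?'' without charging it $\Theta(\log b)$ listens. Here I would exploit the sparsity $|V| \le b^{1-\tilde{\epsilon}}$: at most $b^{1-\tilde{\epsilon}}$ blocks are ever occupied, so a relay that propagates an ``occupied-to-the-left'' token only through the $O(b^{1-\tilde{\epsilon}})$ occupied blocks, reusing the structure fixed once by the backbone, can be replayed across all $\log_b N$ phases while forcing each device to listen only $O(1)$ times per phase. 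Correctness of the reduction is then immediate: at least one block is occupied, so at least one device survives, and only the leftmost occupied block survives, so the ID space shrinks by exactly a factor $b$ without ever eliminating all devices.

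The main obstacle I anticipate is precisely driving the per-phase block-selection down to $O(1)$ \emph{amortized} energy rather than $\Theta(\log b)$ or $\Theta(\log\log b)$: the cost of communicating the single prefix bit must be amortized across phases via a shared, pre-fixed communication pattern, and this is where $|V| \le b^{1-\tilde{\epsilon}}$ is essential, since it makes the relevant occupancy (and hence the relay) sparse. Verifying this amortization is the crux—one must ensure that no device is forced to listen $\omega(1)$ times within a phase even under adversarial block occupancy, and that the $\sendercd$ feedback (a received message versus silence, with collisions indistinguishable from silence) suffices to drive the token correctly. Once this subroutine and its energy bound are established, combining the $O(b\log_b N)$ time, the $O(\log_b N)$ selection energy, and the $O(\log\log b)$ backbone of \cite{ChangKPWZ17} completes the proof of \cref{lem:timeub-aux}.
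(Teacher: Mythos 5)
Your proposal has a genuine gap, and it sits exactly where you flagged it: the $O(1)$-amortized-energy selection of the leftmost nonempty block is not constructed, only postulated, and I do not believe it can be constructed along the lines you sketch. Two concrete obstructions. First, in $\sendercd$ a listener cannot distinguish collision from silence, so a ``beacon'' slot in which all devices of block $j$ transmit conveys \emph{nothing} to listeners whenever block $j$ contains two or more devices; occupancy of a block is simply not observable by listening. Only the transmitters themselves learn anything (whether they were alone), which is useless for your prefix computation. Second, even granting occupancy detection, your sparse relay is circular: a device in block $j$ must listen at the slot where the token-holder from the \emph{last occupied block before $j$} transmits, but under adversarial occupancy it has no way to know which block that is, hence no way to know when to listen without $\Theta(\log b)$ probing or without the token-holder transmitting in every intervening slot (blowing up its energy). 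Relatedly, your ``backbone'' invocation of the algorithm of \cite{ChangKPWZ17} on an ID space of size $b$ is ill-defined: mapping each device to its block index is not injective ($|V| \leq b$ does not prevent two devices from sharing a block), and that algorithm requires distinct identifiers.

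The paper avoids all of this by never disseminating a global bit and never adaptively selecting blocks. It fixes, non-adaptively, a family of $K = O(\tilde{\epsilon}^{-1}\log_b N)$ partitions of $[N]$ into $b$ parts with the property (\cref{lem:good_partition}, proved by a probabilistic/balls-into-bins argument using $|V| \leq b^{1-\tilde{\epsilon}}$) that for \emph{every} admissible $V$ some part of some partition contains exactly one device. In each of the $K$ iterations, each device transmits once in its own part's slot and uses sender-side collision detection to test locally whether it is alone in its part --- $O(1)$ energy per iteration, no listener ever needs to learn anything about other blocks. The isolated devices then have \emph{distinct} part indices, which serve as valid IDs in $[b]$ for a single run of the $O(b)$-time, $O(\log\log b)$-energy algorithm of \cite{ChangKPWZ17}; the stopping rule ensures each device joins at most one such run. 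So where you try to spend the sparsity hypothesis $|V| \leq b^{1-\tilde{\epsilon}}$ on making an online relay cheap, the paper spends it once, offline, on the existence of the good partition family --- which is the missing idea in your writeup.
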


The proof of \cref{lem:timeub-aux} is deferred to \cref{sect:algo_time_ub}. We first prove  \cref{thm:timeub} using  \cref{lem:timeub-aux}.

\begin{proof}[Proof of \cref{thm:timeub}]
Let $n$, $k$, and $\epsilon$ be the parameters in  \cref{thm:timeub}. We may assume that $k \leq O(\log N)$, because both the time and energy complexities in  \cref{thm:timeub} improve if we reduce $k$ from $\omega(\log N)$ to $\Theta(\log N)$. We divide the analysis into two cases.

We first consider the case of $n \leq (N^{1/k})^{1-\epsilon}$.
By setting $b = \ceil{N^{1/k}}$ and $\tilde{\epsilon} = \epsilon$ in \cref{lem:timeub-aux}, it satisfies that $|V| \leq n \leq (N^{1/k})^{1-\epsilon} \leq b^{1 - \tilde{\epsilon}}$, so we may apply \cref{lem:timeub-aux} with the parameters $b$ and $\tilde{\epsilon}$.
We have $\log_b N = O(k)$, so the time and energy complexities of \cref{lem:timeub-aux}  are
\begin{align*}
    T &= O(b \cdot  \log_b N)  = O(k N^{1/k}),\\
    E  &= O(\log_b N + \log \log b) = O(k).
\end{align*}
In the calculation of the energy complexity, we use the assumption  $k \geq \log \log N$ in  \cref{thm:timeub}, so $\log \log b = O(\log \log N) = O(k)$.

Next, we consider the case of $n \geq (N^{1/k})^{1-\epsilon}$.
We pick $b$ to be the smallest integer such that $n \leq b^{1-\epsilon}$, so $b = \Theta(n^{1+\epsilon}) = \Omega(N^{1/k})$, which implies that $\log_b N = O(k)$. Since
$|V| \leq n \leq b^{1-\epsilon}$,
we may apply \cref{lem:timeub-aux} with the parameters $b$ and $\tilde{\epsilon} = \epsilon$, and it has the following time and energy complexities
\begin{align*}
    T &=  O(b \cdot \log_b N) =  O(n^{1+\epsilon}  \cdot k),\\
    E  &= O(\log_b N + \log \log b) = O(k).\qedhere
\end{align*}
\end{proof}

\subsection{Algorithm}\label{sect:algo_time_ub}
In this section, we prove \cref{lem:timeub-aux} using \cref{alg:time_energy_algorithm}.
We define the following notion of a partition.

\begin{definition}[Partition]
We say that $(S_1, S_2, \ldots, S_{b})$ is a partition of $[N]$ if $\bigcup_{j \in [b]} S_j = [N]$ and any two sets $S_i$ and $S_j$ are disjoint.
\end{definition}

\cref{alg:time_energy_algorithm} crucially uses the fact that there exist $K = O(\tilde{\epsilon}^{-1} \log_b N)$ ``good'' partitions $P^{(i)} = (S_1^{(i)}, S_2^{(i)}, \ldots, S_{b}^{(i)})$, for $i \in [K]$, in the sense that there exists at least one $S_j^{(i)}$ that contains \emph{exactly} one device of $V$, for any choice of $V \subseteq [N]$ of size at most $n$. This lemma is proved using a probabilistic method, and we postpone the proof to  \cref{sect:good_partition}. 

\begin{lemma}[Existence of good partitions]\label{lem:good_partition}
Let $\tilde{\epsilon} \in (0,1)$ be a constant. If $n \leq b^{1 - \tilde{\epsilon}}$, then there exist $K = O(\tilde{\epsilon}^{-1} \log_b N)$ partitions $P^{(1)}, P^{(2)}, \ldots, P^{(K)}$ of $[N]$ into $b$ parts satisfying the following.
\begin{itemize}
\item For each subset $V \subseteq [N]$ of size at most $n$, there exist $(i, j)$ such that $\left|S_{j}^{(i)} \cap V\right| = 1$, where $S_{j}^{(i)}$ is the $j$th part of the $i$th partition $P^{(i)} = \left(S_{1}^{(i)}, S_{2}^{(i)}, \ldots, S_{b}^{(i)}\right)$.
\end{itemize}
\end{lemma}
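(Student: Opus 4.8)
The plan is to prove \cref{lem:good_partition} by the probabilistic method. A uniformly random partition of $[N]$ into $b$ parts is generated by assigning to each $x\in[N]$ an independent uniform color $f(x)\in[b]$ and taking the $j$th part to be $f^{-1}(j)$. I would draw $K$ such partitions independently and show that, for a suitable $K=O(\tilde{\epsilon}^{-1}\log_b N)$, the probability that some \emph{bad} set exists --- a set $V\subseteq[N]$ with $|V|\le n$ isolated by none of the $K$ partitions --- is strictly below $1$, which forces a good family to exist. Call a single partition \emph{good for $V$} if some color class is a singleton inside $V$; every partition is automatically good for every $V$ with $|V|\le 1$, so only the sets with $2\le|V|\le n$ impose constraints.

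The crucial quantity is the per-trial failure probability $q_m:=\Pr[\text{a random partition is not good for a fixed }V\text{ with }|V|=m]$, and the whole argument hinges on showing that $q_m$ decays \emph{exponentially in $m$}, namely $q_m\le b^{-\Omega(\tilde{\epsilon} m)}$, rather than merely being bounded below $1$. To see this, note that if a coloring fails to isolate $V$, then every element of $V$ shares its color with another element, so every nonempty color class inside $V$ has size at least $2$; selecting $\lfloor s_i/2\rfloor\ge s_i/3$ disjoint monochromatic pairs inside each class of size $s_i$ produces a monochromatic matching on $V$ of size at least $m/3$. Thus the failure event is contained in the event that some matching of size $t=\ceil{m/3}$ on the $m$ vertices of $V$ is monochromatic. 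Union-bounding over the at most $m^{2t}/(2^t\,t!)$ such matchings, each monochromatic with probability $b^{-t}$, and using $t!\ge(t/e)^t$ with $t\ge m/3$, gives $q_m\le (O(m)/b)^{t}\le (O(1)\,b^{-\tilde{\epsilon}})^{m/3}$ after substituting the hypothesis $m\le n\le b^{1-\tilde{\epsilon}}$. For $b$ above a constant threshold depending only on $\tilde{\epsilon}$, the base $O(1)\,b^{-\tilde{\epsilon}}$ is at most $b^{-\tilde{\epsilon}/2}$, so $q_m\le b^{-\tilde{\epsilon} m/6}$.

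With this estimate the rest is a routine union bound. Over $K$ independent trials the probability that all fail a fixed $V$ of size $m$ is $q_m^{K}\le b^{-\tilde{\epsilon} m K/6}$, and there are at most $\binom{N}{m}\le N^{m}$ sets $V$ of size $m$, so
\[
\Pr[\exists\text{ bad }V]\;\le\;\sum_{m=2}^{n}N^{m}\,b^{-\tilde{\epsilon} m K/6}\;=\;\sum_{m=2}^{n}\left(N\,b^{-\tilde{\epsilon} K/6}\right)^{m}.
\]
Choosing $K=\ceil{(12/\tilde{\epsilon})\log_b N}$ makes $N\,b^{-\tilde{\epsilon} K/6}\le N\cdot N^{-2}\le 1/2$, so the geometric sum is strictly below $1$; hence some realization of the $K$ partitions is good for all $V$, and $K=O(\tilde{\epsilon}^{-1}\log_b N)$ as required. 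The only corner case is $b$ below the threshold: there $b=O_{\tilde{\epsilon}}(1)$ forces $m\le b^{1-\tilde{\epsilon}}=O_{\tilde{\epsilon}}(1)$, leaving only $N^{O_{\tilde{\epsilon}}(1)}$ relevant sets $V$, each failed by one random partition with probability at most $1-(1-1/b)^{m-1}\le 1-1/e$; the same union bound then closes with $K=O(\log N)=O(\tilde{\epsilon}^{-1}\log_b N)$ since $\log_b N=\Theta(\log N)$ for constant $b$.

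I expect the main obstacle to be establishing the exponential decay $q_m\le b^{-\Omega(\tilde{\epsilon} m)}$. The naive approach --- lower-bounding the probability that one \emph{fixed} element of $V$ is isolated by $(1-1/b)^{m-1}\ge 1/e$ --- only yields $q_m\le 1-1/e$, a constant, and feeding this into the union bound would demand $K=\Omega(n\log N)$, far above the target. The monochromatic-matching argument (equivalently, any argument exploiting that a \emph{total} failure forces $\Omega(m)$ simultaneous collisions) is precisely what upgrades the weak per-element estimate into the exponential-in-$m$ bound the union bound needs; I note in passing that a second-moment estimate on the number of isolated elements would give only polynomial decay $q_m=O(1/m)$ and would likewise be insufficient.
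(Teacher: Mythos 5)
Your proof is correct, and although its outer skeleton coincides with the paper's --- draw $K$ independent uniformly random partitions, prove a per-set failure probability decaying exponentially in $|V|$, then union-bound over all $\binom{N}{m}\le N^m$ sets of each size $m$ and over $m\le n$, choosing $K=\Theta(\tilde{\epsilon}^{-1}\log_b N)$ --- the core exponential-decay estimate is established by a genuinely different argument. The paper proves it via a balls-into-bins lemma (\cref{lem:balls_in_bins}) using a concentration argument: re-order the bins so the non-empty ones come first, observe that the indicators $X_i$ of ``ball $i$ lands beyond the first $n$ positions'' are i.i.d.\ with mean $1-n/b$, note that $\sum_i X_i>n/2$ forces more than $n/2$ occupied bins and hence, by pigeonhole, a bin with exactly one ball, and finish with a Chernoff bound to get failure probability at most $(4n/b)^{n/2}$. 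You instead argue combinatorially: total failure to isolate forces every nonempty color class inside $V$ to have size at least two, hence a monochromatic matching of size $t=\ceil{m/3}$, and a union bound over the at most $m^{2t}/(2^t\,t!)$ such matchings, each monochromatic with probability $b^{-t}$, yields $q_m\le (O(m)/b)^{m/3}$ --- the same quantitative decay up to constants. Your route is more elementary (no concentration inequality) and makes the mechanism of the exponential decay explicit, namely that failure requires $\Omega(m)$ disjoint collisions; the paper's route is shorter once Chernoff is available and isolates a reusable lemma. A genuine plus of your writeup is that you explicitly dispatch the corner case of $b$ below a constant threshold depending on $\tilde{\epsilon}$, where the base $O(1)\cdot b^{-\tilde{\epsilon}}$ need not be below $1$; the paper has the same boundary issue implicitly (\cref{lem:balls_in_bins} assumes $n\le b/2$, which under $n\le b^{1-\tilde{\epsilon}}$ only holds once $b^{\tilde{\epsilon}}\ge 2$, and the step $(\tilde{n}/b)^{\Omega(\tilde{n})}\le b^{-\Omega(\tilde{n})\cdot\tilde{\epsilon}}$ silently absorbs constants) and leaves it unaddressed --- harmless since $\tilde{\epsilon}$ is a constant, but your treatment is the more careful one.
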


The algorithm also invokes a deterministic leader election algorithm of  \cite{ChangKPWZ17} as a subroutine on an ID space of size $N' = b$, and it costs  $O(N')$ time and $O(\log \log N')$ energy, see \cref{tab:old}.

\cref{alg:time_energy_algorithm} has $K= O(\tilde{\epsilon}^{-1} \log_b N)$ iterations, and in the $i$th iteration we consider the $i$th partition $P^{(i)}$. For $j = 1,2, \ldots, b$, we let all devices in the $j$th set $S_j^{(i)}$ speak simultaneously (Line~\ref{line:transmit_message}). If a device $v$ is the only speaking device in its set, then it marks itself by setting $s(v) \leftarrow 1$ (Line~\ref{line:set_s_v}). A device can detect this because we are in the $\sendercd$ model. We collect all marked devices and run the deterministic leader election algorithm of~\cite{ChangKPWZ17} among them (Line~\ref{line:run_known_algo}). 

When running the leader election algorithm of~\cite{ChangKPWZ17}, $v \in S_j^{(i)}$ use $j$ as its new identifier, so that the new ID space becomes $[b]$. These identifiers are unique because $v$ is the unique device in its set $S_j^{(i)}$ if $s(v) = 1$.
\cref{alg:time_energy_algorithm} stops whenever the set of marked devices is non-empty and thus a leader is successfully elected by the algorithm of~\cite{ChangKPWZ17} (Line~\ref{line:break}).

\begin{algorithm}[ht]\caption{A leader election algorithm in the $\sendercd$ model with near-optimal time-energy trade-off.}\label{alg:time_energy_algorithm}
\SetAlgoLined
Let $P^{(i)} = (S_1^{(i)}, S_2^{(i)}, \ldots, S_{b}^{(i)})$ for $i \in [K]$ be the $K=O(\tilde{\epsilon}^{-1} \log_b N)$ good partitions given by Lemma~\ref{lem:good_partition}\;
\For{$i= 1, 2, \ldots, K$}
{
All devices $v \in V$ set $s(v) \leftarrow 0$\;
\For{$j= 1, 2, \ldots, b$}
{
All devices in set $S_j^{(i)}$ transmits a message at the same time\;\label{line:transmit_message}
\lIf{$v \in S_j^{(i)}$ hears its message (i.e., not silence or collision)}
{$v$ sets $s(v) \leftarrow 1$\label{line:set_s_v}}
}
All devices $v$ with $s(v) = 1$ run the deterministic leader election algorithm of \cite{ChangKPWZ17}\;\label{line:run_known_algo}
\lIf{a leader is elected in the previous step}
{break}\label{line:break}
}
\end{algorithm}

\paragraph{Analysis.}
\cref{alg:time_energy_algorithm} successfully elects a leader if in some iteration $i \in [K]$, there exists a set $S_j^{(i)}$ such that there is exactly one device $v \in S_j^{(i)}$. This is guaranteed by the property of the good partitions of Lemma~\ref{lem:good_partition}.

\cref{alg:time_energy_algorithm} has $K=O(\tilde{\epsilon}^{-1} \log_b N) = O(\log_b N)$ iterations, and in each iteration it takes $O(1)$ energy and $O(b)$ time for the devices to check if they are the only device in their sets. In total this part takes $O(\log_b N)$ energy and $O(b  \cdot \log_b N)$ time. 

Furthermore, the deterministic algorithm of~\cite{ChangKPWZ17} is called at most $K=O(\tilde{\epsilon}^{-1} \log_b N) = O(\log_b N)$  times, but each device only participates in at most one of them, because we stop whenever a leader is elected, and a leader is guaranteed to be elected if the number of participants is at least one.
The algorithm of~\cite{ChangKPWZ17} is executed on an ID space of size $N' = b$, and so it takes $O(\log \log N') = O(\log \log b)$ energy and $O(N') = O(b)$ time. Hence the overall complexities for running the algorithm of~\cite{ChangKPWZ17}  is  $O(Kb) = O(b  \cdot \log_b N)$ time and  $O(\log \log b)$ energy.

To summarize, the overall time complexity is $T = O(b  \cdot \log_b N)$ and the overall energy complexity is $E = O(\log_b N + \log \log b)$.

\subsection{Existence of good partitions}\label{sect:good_partition}
Next we prove the existence of $K$ good partitions in Lemma~\ref{lem:good_partition}. The proof relies on a balls-into-bins tool proved in Lemma~\ref{lem:balls_in_bins}.

\begin{lemma}[Balls into bins]\label{lem:balls_in_bins}
Let $n$ and $b$ be two integers that satisfy $n \leq b/2$. There are $n$ balls and $b$ bins. Each time a ball is uniformly randomly thrown into a bin. Then with probability at least $1 - (\frac{n}{b})^{\Omega(n)}$, there exists at least one bin that contains exactly one ball.
\end{lemma}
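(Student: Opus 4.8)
The plan is to bound the failure probability directly by a counting argument, because the target $(n/b)^{\Omega(n)}$ is a large-deviation estimate that the second-moment method cannot reach: letting $X$ be the number of bins holding exactly one ball, one has $\mathbb{E}[X] = n(1-1/b)^{n-1} = \Theta(n)$ when $n \le b/2$, but Chebyshev's inequality would only yield $\Pr[X=0] = O(1/n)$, which is merely polynomially small. So instead I would count configurations. There are $b^n$ equally likely ways to throw the $n$ labeled balls into the $b$ bins. Call a configuration \emph{bad} if no bin contains exactly one ball, i.e.\ every occupied bin holds at least two balls. A bad configuration then occupies at most $\lfloor n/2 \rfloor$ distinct bins, so the number of bad configurations is at most $\sum_{m=1}^{\lfloor n/2\rfloor} \binom{b}{m}\, m^n$: first choose the set of $m \le n/2$ occupied bins, then crudely bound by $m^n$ the number of ways to place the $n$ balls into them (this overcounts placements that leave some chosen bin empty or singleton, which only helps an upper bound).

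Next I would estimate each term. Using $\binom{b}{m} \le (eb/m)^m$, the $m$-th term divided by $b^n$ is at most
\[
(eb/m)^m\, m^n / b^n = e^m (m/b)^{n-m}.
\]
Here the two hypotheses do the work: every bad configuration has $m \le n/2$, so $n-m \ge n/2$, and $n \le b/2$ gives $m/b \le n/(2b) \le 1/4 < 1$. Since the base $m/b$ is below $1$, raising it to the larger exponent only decreases it, and $e^m \le e^{n/2}$, so each term is at most $(en/(2b))^{n/2}$. Summing the at most $n/2$ terms yields
\[
\Pr[\text{bad}] \le \tfrac{n}{2}\left(\tfrac{en}{2b}\right)^{n/2}.
\]

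Finally I would convert this into the stated form. Writing $(en/(2b))^{n/2} = (e/2)^{n/2}(n/b)^{n/2}$, the constant check $e^2 \approx 7.39 < 8$ gives $(e/2)^{n/2} \le 2^{n/4}$, and $n/b \le 1/2$ gives $2^{n/4} \le (b/n)^{n/4} = (n/b)^{-n/4}$; multiplying, $(en/(2b))^{n/2} \le (n/b)^{n/4}$. The leftover polynomial prefactor $\tfrac{n}{2}$ is absorbed by the exponential decay, since $(n/b)^{n/8} \le 2^{-n/8}$ dominates $\tfrac{n}{2}$ for all but finitely many $n$ (small $n$ being trivial, e.g.\ $n=1$ gives failure probability $0$), so $\Pr[\text{bad}] \le (n/b)^{n/8} = (n/b)^{\Omega(n)}$, as claimed. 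I expect the only delicate point to be exactly this last bookkeeping: one must take care to turn the ``base below $1$ raised to $\Theta(n)$'' estimate into the specific form $(n/b)^{\Omega(n)}$ rather than settling for a generic $c^n$ bound, which is precisely where the hypothesis $n \le b/2$ (ensuring $n/b \le 1/2$, hence $b/n \ge 2$) is used.
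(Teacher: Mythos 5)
Your proof is correct, but it takes a genuinely different route from the paper. You count bad configurations directly: a configuration with no singleton bin occupies at most $\lfloor n/2\rfloor$ bins, so the failure probability is at most $\sum_{m\le n/2}\binom{b}{m}m^n/b^n$, and the term bound $e^m(m/b)^{n-m}\le (en/(2b))^{n/2}$ together with $n\le b/2$ yields $(n/b)^{\Omega(n)}$ after your (correct) bookkeeping with $(e/2)^2<8$ and $b/n\ge 2$. The paper instead runs a sequential argument: before the $i$th throw it reorders bins so the occupied ones come first, defines $X_i=1$ if ball $i$ lands among indices $\{n+1,\dots,b\}$ (hence in an empty bin), observes that the $X_i$ are i.i.d.\ with $p=1-n/b$, notes that $\sum_i X_i>n/2$ forces a surviving singleton (more than $n/2$ freshly opened bins, fewer than $n/2$ balls left to ruin them), and applies a Chernoff bound to get failure probability $(4n/b)^{n/2}$. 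Your approach is more elementary --- a pure union bound, no concentration inequality and no reordering trick --- and your bound $\frac{n}{2}(en/(2b))^{n/2}$ has a better per-ball constant ($\sqrt{e/2}\approx 1.17$ versus the paper's implicit $2^n(n/b)^{n/2}$); the paper's approach is shorter once Chernoff is taken off the shelf and isolates a reusable structural fact about fresh bins. Two small remarks on your final step: your parenthetical that small $n$ is ``trivial'' is literally true only for $n=1$; for the intermediate range where $\frac{n}{2}2^{-n/8}>1$ (roughly $2\le n\lesssim 40$) you must either shrink the constant in the exponent or invoke the slack in $\Omega(n)$ --- but this is exactly the same looseness present in the paper, whose bound $(4n/b)^{n/2}$ is itself vacuous when $b<4n$, so it is not a gap relative to the claimed statement. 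You also correctly diagnose why the second-moment method is insufficient here, which the paper does not discuss.
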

\ifdefined\isspaa
We postpone the proof to \cref{sec:pf_balls_in_bins}.
\else
\begin{proof}
We define $n$ random variables $X_1, X_2, \ldots, X_n \in \{0,1\}$ as follows. Before throwing the $i$th ball into a random bin, we first re-order the bins such that the non-empty bins have the smallest indices. We define $X_i = 1$ if the $i$th ball is thrown into a bin whose index is in $\{n+1, \ldots, b\}$, and otherwise we define $X_i = 0$.

It is easy to see that $X_1, X_2,\ldots,X_n$ are i.i.d.~random variables with probability $p = 1 - \frac{n}{b}$ to be 1. Since there can be at most $i-1 < n$ non-empty bins before we throw the $i$th ball, we know that the $i$th ball is thrown into an empty bin if $X_i = 1$. Hence there must exist a bin that contains exactly one ball if $\sum_{i=1}^n X_i > \frac{n}{2}$.

Next, we upper bound the probability of the bad event that $\sum_{i=1}^n X_i \leq \frac{n}{2}$. Define $\delta = \frac{1}{2} - \frac{n}{b}$. Using the Chernoff bound, we have
\begingroup
\allowdisplaybreaks
\begin{align*}
    \Pr\left[\sum_{i=1}^n X_i \leq \frac{n}{2}\right] 
    &= \Pr\left[\frac{1}{n} \sum_{i=1}^n X_i \leq p - \delta\right] \\
    &\leq \left( \left( \frac{p}{p - \delta} \right)^{p - \delta} \cdot \left( \frac{1 - p}{1 - p + \delta} \right)^{1 - p + \delta} \right)^{n} \\
    &\leq \left( 2^{1/2} \cdot \left( \frac{2n}{b} \right)^{1/2} \right)^{n}\\
    &= \left(\frac{4n}{b}\right)^{n/2}.
\end{align*}
\endgroup
Thus, with probability at least $1 - (\frac{4n}{b} )^{n/2}$, we have $\sum_{i=1}^n X_i > \frac{n}{2}$, which implies that there exists at least one bin that contains exactly one ball.
\end{proof}
\fi

\begin{proof}[Proof of \cref{lem:good_partition}]
Consider a fixed set $V \subseteq [N]$ of size $\tilde{n}$, where $1 \leq \tilde{n} \leq n$. We independently generate $K = \Theta(\tilde{\epsilon}^{-1} \log_b N)$ random partitions such that each partition consists of $b$ sets, and each number $x \in [N]$ is uniformly randomly put into one set in the partition. We have
\begin{align*}
    \Pr\left[\exists S^{(i)}_j \text{~s.t.~} |V \cap S^{(i)}_j| = 1\right] 
    &= 1 - \prod_{i=1}^{K} \Pr\left[\forall j \in [b], |V \cap S^{(i)}_j| \neq 1\right] \\
    &\geq 1 - \left(\frac{\tilde{n}}{b}\right)^{\Omega(\tilde{n}) \cdot K}\\
    &\geq  1 - b^{-\Omega(\tilde{n}) \cdot K \cdot \tilde{\epsilon}}\\
    &\geq 1 - N^{-\Omega(\tilde{n})}.
\end{align*}

The first step follows from the fact that the $K$ partitions are independent.
The second step follows from Lemma~\ref{lem:balls_in_bins} with $\tilde{n}$ balls and $b$ bins.
The third step follows from $\tilde{n} \leq n \leq b^{1-\tilde{\epsilon}}$, which implies $\frac{\tilde{n}}{b} \leq b^{-\tilde{\epsilon}}$.
The fourth step follows from $K = \Theta(\tilde{\epsilon}^{-1} \log_b N)$.

We write $B_{V}$ to denote the bad event that $|V \cap S^{(i)}_j| \neq 1$ for all $S^{(i)}_j$, and we write $B_{\tilde{n}}$ to denote the bad event that $B_{V}$ occurs for at least one subset $V \subseteq [N]$ of size $\tilde{n}$.  
The above calculation implies that for any given large constant $C$, we can select $K = \Theta(\tilde{\epsilon}^{-1} \log_b N)$ to be sufficiently large to make $\Pr[B_{V}] \leq N^{-C \tilde{n}}$, where $\tilde{n}$ is the size of $V$.

There are in total $\binom{N}{\tilde{n}}   \leq N^{\tilde{n}}$ number of subsets $V \subseteq [N]$ of size $\tilde{n}$. Using a union bound, we know that $B_{\tilde{n}}$ occurs with probability  at most $N^{-(C-1)\tilde{n}}$.
With another union bound over $1 \leq \tilde{n} \leq n$, the probability that there exist $(i, j)$ such that $\left|S_{j}^{(i)} \cap V\right| = 1$ for each subset $V \subseteq [N]$ of size at most $n$ is at least $1 - \sum_{\tilde{n} = 1}^n N^{-(C-1)\tilde{n}} > 0$ by selecting $C$ to be a sufficiently large constant. Since the probability is non-zero, there must exist $K$ partitions that satisfy the desired property.
\end{proof}

\section{Dense Instances}\label{sect-ub}

The goal of this section is to prove the following theorem.

\begin{theorem}
\label{thm:ub-main}
Let $N$ be the size of the ID space. There is a deterministic leader election algorithm with time complexity $T = O(N)$ and energy complexity
\[
\small
E = \begin{cases}
O\left(\max\left\{1, \log  \frac{N}{n}\right\}\right)  &\text{for \ $\receivercd$, $\nocd$,}\\[0.2cm] 
O\left(\max\left\{1,\log \log \frac{N}{n}\right\}\right)  &\text{for \  $\strongcd$, $\sendercd$.}
\end{cases}
\]
The algorithm does not require any prior knowledge of the number of devices $n = |V|$.
\end{theorem}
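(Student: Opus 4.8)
The plan is to split the algorithm into a cheap \emph{universe-reduction} phase followed by an $O(1)$-energy \emph{dense finisher}, and to make the whole process adaptive so that no prior knowledge of $n=|V|$ is needed. The guiding intuition is that the expensive instances are those in which the $n$ devices are spread thinly over $[N]$; once the \emph{effective} universe has been shrunk to size $\Theta(n)$, a constant fraction of the slots are occupied and the instance is dense. I would therefore first isolate and prove the core subroutine as a separate claim: given a universe $[M]$ whose device set $W$ satisfies $|W|\ge M/4$ (a balanced dense instance), elect a leader in $O(M)$ time and $O(1)$ energy. Granting this finisher, the theorem reduces to showing how to transform a general instance into a balanced dense one while charging each device only $O(\log(N/n))$ energy without sender-side collision detection and $O(\log\log(N/n))$ energy with it. Throughout, maintaining $O(N)$ time is routine, since each stage runs in time linear in its current universe size and these sizes shrink geometrically.

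For the reduction I would contract the universe by a \emph{spreading} map rather than by contiguous blocks. A quick sanity check rules out the naive contiguous contraction: for an instance clustered into $[1,n]$ the occupancy ratio stays stuck at $n/N$ forever, so density is never reached. Instead I would use the balls-into-bins / good-partition machinery of \cref{lem:balls_in_bins,lem:good_partition} to hash the current device set into $\Theta(|W|)$ buckets, which makes the occupancy ratio $\Theta(1)$ and hence the contracted instance dense. Since $n$, and thus the correct contraction scale, is unknown, I would run the contraction at geometrically growing scales and stop at the first scale that produces a dense sub-instance on which the finisher succeeds; the first successful scale is the one matching the true $n$. If the scale factors double, the number of stages before the scale reaches $\Theta(N/n)$ is $O(\log(N/n))$; in $\strongcd$ and $\sendercd$ the ability to test ``\emph{exactly} one transmitter'' lets a single stage advance the occupancy by squaring rather than by a constant factor, collapsing the stage count to $O(\log\log(N/n))$ and producing the two regimes of the statement. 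The outer $\max\{1,\cdot\}$ simply records that the finisher already costs $\Theta(1)$ even when $n=\Theta(N)$.

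The dense finisher is where the real work lies, and it is where I expect the main obstacle. Because $|W|=\Theta(M)$, one cannot isolate a device in a single good partition (that would need more than $M$ buckets), so the finisher must \emph{gradually} thin the instance while preparing to amortize its own cost. The mechanism I would aim for is the energy-\emph{sharing} idea sketched in the introduction, pushed to $O(1)$: partition $[M]$ into blocks of size $\Theta(\log M)$, observe that most blocks contain $\Omega(\log M)$ devices and hence form a sharing group large enough to sustain all $O(\log M)$ subsequent contraction levels at $O(1)$ energy per member (the members simply take turns across the long $O(N)$-slot timeline), and handle the minority of under-full blocks by folding them forward. The delicate points, and the reason prior work paid $O(\log^\ast N)$ or $O(\alpha(N))$, are (i) establishing the group membership and turn-schedule \emph{within} a $\Theta(\log M)$-block at $O(1)$ energy per member rather than the naive $\Theta(\log M)$, and (ii) avoiding any bootstrapping tower in which groups start too small to amortize. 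Getting the group formation itself down to $O(1)$ energy, so that no device — not even the eventual leader — ever pays more than a constant, is the crux of the argument and the step I would spend the most care on.

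Finally, I would verify the model distinctions propagate correctly. Electing a block representative and detecting the ``dense'' stopping condition both rely on distinguishing one transmitter from many; in $\strongcd$ and $\sendercd$ this test is free, so each contraction can square the occupancy and the squaring accounts for the $O(\log\log(N/n))$ bound, whereas in $\receivercd$ and $\nocd$ the same test must be simulated at a $\Theta(\log)$ overhead per effective bit, which is exactly what degrades the trade-off to $O(\log(N/n))$. Since lower bounds for stronger models apply to weaker ones and the stated bounds match the lower bounds in \cref{thm:energy_lb_log_N_n,thm:energy_lb_log_log_N_n}, confirming these two accounting claims completes the proof.
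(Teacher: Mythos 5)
There is a genuine gap: your proof delegates the entire theorem to an ``$O(1)$-energy dense finisher'' and then leaves exactly that step open --- you yourself flag $O(1)$-energy group formation inside $\Theta(\log M)$-blocks as ``the crux'' you ``would spend the most care on.'' The mechanism you sketch (energy-sharing groups, folding under-full blocks forward) is precisely the bootstrapping scheme of \cite{JurdzinskiKZ02podc,ChangKPWZ17} whose recursive tower is what costs $O(\log^\ast N)$ resp.\ $O(\alpha(N))$, so nothing in the proposal gets below that. The paper's finisher is a different and much simpler device that avoids groups and amortization altogether: it maintains a single chain $S=(v_1,\ldots,v_x)$ over $\ceil{N/b}$ iterations; in iteration $i$ the current head (the device with $r(u)=i$) recruits the devices whose IDs lie in the block $\{b(i-1)+1,\ldots,\min\{N,bi\}\}$, passing along a counter $s(u)$ encoding the chain length, and then \emph{permanently retires} from the chain. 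Each recruit pays $O(1)$ energy (one listen at its own ID's slot, later one block's worth of recruiting), retirement means empty stretches of the ID space cost nobody anything, and correctness is a pure counting argument: at most $\ceil{N/b}$ devices ever retire, so whenever $n>\ceil{N/b}$, i.e.\ $b=O(N/n)$ ($b=O(1)$ in the dense case), the final chain is nonempty and the device with $\rank=1$ is the leader. This counting idea is the missing ingredient your plan needs and does not contain.

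Two further steps of your reduction also fail as stated. First, the hashing-based universe contraction misapplies \cref{lem:good_partition}: that lemma requires $n\le b^{1-\tilde{\epsilon}}$, i.e.\ far fewer devices than buckets, and is built to \emph{isolate} a single device in a sparse instance; hashing $[N]$ onto $\Theta(n)$ buckets makes buckets multiply occupied, and turning each occupied bucket into one ``contracted device'' is itself an unsolved intra-bucket leader election, so the reduction is circular. Second, the claim that in $\strongcd$/$\sendercd$ a stage can ``square the occupancy,'' yielding $O(\log\log\frac{N}{n})$ stages, has no algorithmic substance behind it. In the paper the two regimes come from a black box, not from a density-squaring dynamic: the head contacts only the first member of each block after the block's devices are ordered by the census algorithms of \cite{ChangKPWZ17}, which cost $O(\log b)$ energy in $\{\receivercd,\nocd\}$ and $O(\log\log b)$ in $\{\strongcd,\sendercd\}$ on a block of size $b$; unknown $n$ is handled by exponential search over $b$ with the schedules $b=\min\{N,2^{2^{i}}\}$ resp.\ $b=\min\{N,2^{2^{2^{i}}}\}$; and total time is kept at $O(N)$ by interleaving, between attempts, the pairwise ID-halving procedure --- which, consistent with your own observation, never increases density and is used solely to shrink the time of later attempts geometrically, not to reach density.
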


The algorithm for \cref{thm:ub-main} works for all possible network sizes $1 \leq |V| \leq N$, and it does not require any prior knowledge on $|V|$. Throughout this section, we write $n = |V|$ to denote the unknown number of devices.

\begin{definition}[Group]\label{def:group}
A sequence of devices $S = (v_1, v_2, \ldots, v_s)$ forms a group if each $v_{\ell} \in S$ has $\rank(v_{\ell}) = \ell$ and each $u \in V \setminus S$ has $\rank(u) = \bot$. 
\end{definition}

In \cref{def:group}, $\rank(v)$ is a variable stored in the device $v$.
 The task of leader election can be reduced to forming a group  $S$ of size at least one, as we can let the unique device $v$ with $\rank(v) = 1$  be the leader. Note that the requirement of $\rank(u) = \bot$ for each  $u \in V \setminus S$  in \cref{def:group} implies that each device $u \in V \setminus S$ is aware of the fact that it does not belong to the group $S$.

\subsection{A Simple Algorithm}\label{sect-simple-alg}

In this section, we show a very simple algorithm that elects a leader in $O(N)$ time and $O(1)$ energy when $n = \Theta(N)$.
 We will show that choosing $b$ as any integer such that $n > \ceil{N/b}$, the output of \cref{alg:simple} is a group of at least one device. \cref{alg:simple} solves the leader election problem as we can let the unique device $v$ with $\rank(v) = 1$ be the leader.

\paragraph{High-level idea.} The main idea behind our algorithm is that we try to maintain and grow a list of devices $S = (v_1, v_2, \ldots, v_x)$. If each $v_{\ell}$ knows its rank $\ell$ and each $v \in V \setminus S$ knows that it is not in $S$, then $S$ forms a group. As we will later see, the list $S$ might become empty for several times during the algorithm, but it is guaranteed to be non-empty at the end of the algorithm.

There will be $\ceil{N/b}$ iterations. During the $i$th iteration of the algorithm, the first device $v_1$ in the current list $S$ will be responsible for recruiting the next batch of devices in the ID space $\{b(i-1)+1, b(i-1)+2, \ldots, \min\{N, bi\}\}$ to join $S$. If $S$ is empty at the beginning of an iteration, then the first device in the batch considered in this iteration will become the first member $v_1$ of $S$.

After that, $v_1$ will drop out from $S$. This step is necessary because $S$ might become empty in some subsequent iteration, and $v_1$ will not be able to know that without spending energy.
Because the number of iterations $\ceil{N/b}$ is less than the number of devices $n$, the list $S$ is guaranteed to be non-empty at the end of the algorithm.

Each device $v$ will spend $O(1)$ energy when it is recruited into $S$. 
When $n = \Theta(N)$, the size of the ID space $\{b(i-1)+1, b(i-1)+2, \ldots, \min\{N, bi\}\}$ considered in the $i$th iteration is at most a constant $b = O(1)$, and so the energy cost for the device responsible for recruiting new devices is also $O(1)$. Hence the energy complexity of our algorithm is $O(1)$.

\begin{algorithm}[!ht]
\SetAlgoLined
All devices $w \in V$ initialize $r(w) \leftarrow \bot$ and $s(w) \leftarrow \bot$\;
\label{ll1}\For{$i= 1, 2, \ldots, \ceil{N/b}$}
{
 \For{$j= b(i-1)+1, b(i-1)+2, \ldots, \min\{N, bi\}$\label{ll2}}
{
Let $u$ be the device with $r(u) = i$\;\label{ll21}
Let $v$ be the device with $\ID(v) = j$\;\label{ll22}
\label{ll3} $u$ transmits $s(u)$ and $v$ listens\;
\label{ll4} $v$ transmits a dummy message and $u$ listens\;
  \leIf{$v$ hears a message from $u$}{$v$ sets  $r(v) \leftarrow s(u)+1$\;\label{ll6}
   }{$v$ sets $r(v) \leftarrow i$ and $s(v) \leftarrow i$\label{ll7}}
  \lIf{$u$ hears a message from $v$}
  {$u$ sets $s(u) \leftarrow s(u)+1$\label{ll5}}
 }\label{ll8}
Let $u'$ be the device with $r(u') = i$\;\label{ll9}
Let $v'$ be the device with $r(v') = i+1$\;\label{ll91}
$u'$ transmits $s(u')$ and $v'$ listens\;
$v'$ sets $s(v') \leftarrow s(u')$\;\label{ll10}
}
\ForEach{$w \in V$}
{
\leIf{$r(w) \geq \ceil{N/b}+1$ }
{$\rank(w) \leftarrow r(w) - \ceil{N/b}$\;}
{$\rank(w) \leftarrow \bot$}
}
 \caption{A simple leader election algorithm.}\label{alg:simple}
\end{algorithm}

\paragraph{Maintenance of a group.}
 \cref{alg:simple} is a realization of the above high-level idea. During the algorithm, each device $w \in V$ will maintain two variables $r(w)$ and $s(w)$.  
 At the beginning of the $i$th iteration, we implicitly maintain a group   $S=(v_1, v_2, \ldots, v_x)$ by setting 
\[
\rank(w) = 
\begin{cases}
r(w) - i + 1 & \text{if $r(w) \geq i$,}\\
\bot & \text{otherwise,}
\end{cases}
\]
and we always have: (Note that $x=|S|$.)
\[s(v_1) = x + i -1.\]
In particular, we have the following observations.
\begin{itemize}
    \item The first member $v_1$ of the group $S$ can learn the size $x$ of $S$ by reading $s(v_1)$.
    \item Each $w \in S$ can learn its rank in the group  by reading $r(w)$.
    \item Each $w \in V \setminus S$ can learn the fact that $w$ is not in the group  by reading $r(w)$.
\end{itemize}

That is, the variable $r(w)$ allows each  $w \in V$ to calculate $\rank(w)$ w.r.t.~$S$, and $s(v_1)$ allows the first member $v_1$ of the group $S$ to calculate the size of $S$. 
The knowledge about the size of $S$ is crucial for  $v_1$ because it needs this information to inform each new member of $S$ its rank in $S$.

In \cref{ll21,ll22,ll9,ll91} of  \cref{alg:simple}, it is fine if no such device exists. For example, if $u$ does not exist and $v$ exists, then $v$ will not hear a message from $u$, so $v$ will set $r(v) \leftarrow i$ and $s(v) \leftarrow i$ according to \cref{ll7}. This occurs when the group $S$ at the beginning of the $i$th iteration is empty and the device $v$ with $\ID(v) = j$ is the first device  among all devices whose $\ID$ is in the range $\{b(i-1)+1, b(i-1)+2, \ldots, \min\{N, bi\}\}$. After setting $r(v) \leftarrow i$ and $s(v) \leftarrow i$, $v$ becomes the first member $v_1$ of $S=(v_1)$.

\paragraph{Algorithm.}
In  \cref{alg:simple}, 
initially we have $S = \emptyset$.  For  $i= 1, 2, \ldots, \ceil{N/b}$,   we let the first member $u = v_1$ of the group $S$  recruit new members $v$ to join the group (\Crefrange{ll2}{ll8}). Specifically, device $u$ will communicate with each device $v$ with $\ID(v) = j$, for each $j = b(i-1)+1, b(i-1)+2, \ldots, \min\{N, bi\}$ one by one. 

After finishing this task, device $u$ drops out of the group $S$ when we proceed to the next iteration $i+1$, as $r(u) = i$. The reason that $u$ needs to drop out of the group $S$ is as follows. 
If there is a long interval $I$ of $\ID$s     not held by any device, then the existing group members cannot afford the energy to continue recruiting during the steps $j \in I$. Letting old members to drop out of the group resolves the issue, as no one needs to spend energy if the group is empty and no one attempts to join the group.


For the case where the device $v$ with $\ID(v) = j$ exists and $S \neq \emptyset$,
 to add $v$ to the current group $S =(v_1, v_2, \ldots, v_x)$,   device $u$ sends the number $s(u) = x + i - 1$ to $v$, and then $v$ appends itself to the end of the list $(v_1, v_2, \ldots, v_x)$ by setting $r(v) \leftarrow s(u)+1$ (\cref{ll6}). After that, we have  $S =(v_1, v_2, \ldots, v_{x+1})$ with $v = v_{x+1}$. To reflect this change, the device $u$ updates $s(u) \leftarrow s(u)+1$ (\cref{ll5}), so that $s(u) = (x+1) + i - 1$ reflects the new size $x+1$ of the group $S$.

For the case the device $v$ with $\ID(v) = j$ exists and $S = \emptyset$, device $v$  becomes the first member $v_1$ of the group $S=(v_1)$ by setting $r(v) \leftarrow i$ and $s(v) \leftarrow i$ (\cref{ll7}).

After the first member $u' = v_1$ of the group $S$ finishes its job of recruiting new members, it informs the second member $v' = v_2$ of the current size of the group by sending the number $s(u')$ to $v'$ (\Crefrange{ll9}{ll10}).

\paragraph{Analysis.} 
\cref{alg:simple} elects a leader if the final group size is at least one. It is clear that the size of the final group is at least $n - \ceil{N/b}$, because the number of devices that drop out of the group is at most $\ceil{N/b}$, in view of the above discussion. More formally, each device $w \in V$ is assigned a distinct positive integer $r(w)$ during the iteration $j = \ID(w)$, and $w$ is included in the final group if $r(w) \geq \ceil{N/b} + 1$. Hence the size of the final group is at least $n - \ceil{N/b}$.

We can set $b = O(N/n)$
to satisfy $n - \ceil{N/b} > 0$.
It is clear that the runtime of \cref{alg:simple} is $T = O(N)$, and the energy cost per device is $E= O(b) = O\left(c^{-1}\right)$, where $c = n/N$.

\paragraph{Exponential search.} If the density $c = n/N$ is unknown, then we can do an exponential search for each $b = 2, 4, 8, \ldots$ until \cref{alg:simple} returns a group of size at least one. To test if the group has size at least one, we can let the device $w$ with $\rank(w) = 1$ transmit and let all other devices listen. As long as $n = \Theta(N)$, the exponential search finishes within a constant number of iterations, and so leader election can be solved with time complexity $T = O(N)$ and energy complexity $E = O(1)$ in the deterministic $\nocd$ model.

\subsection{An Improved Algorithm}

Our simple algorithm in \cref{sect-simple-alg} has energy complexity $O\left(c^{-1}\right)$, where $c = n/N$. We show that by combining our approach in \cref{sect-simple-alg} with   existing  algorithms in~\cite{ChangKPWZ17}, the energy complexity can be further improved to $O\left(\max\left\{1, \log c^{-1}\right\}\right)$ for the case of $\receivercd$ and $\nocd$ and  $O\left(\max\left\{1, \log \log c^{-1}\right\}\right)$ for the case of  $\strongcd$ and $\sendercd$.

 \begin{algorithm}[!ht]
\SetAlgoLined
All devices $w \in V$ initialize $r(w) \leftarrow \bot$ and $s(w) \leftarrow \bot$\;

\label{lll1}\For{$i= 1, 2, \ldots, \ceil{N/b}$}
{
Let $V_i = \{ v \in V \ | \ b(i-1)+1 \leq \ID(v) \leq \min\{N, bi\} \}$ and $s_i = |V_i|$\;
 Arrange the set of devices $V_i$ in some order $V_i = (v_{i,1}, v_{i,2}, \ldots, v_{i, s_i})$\; \label{lll2}
 \tcc{It is required that  $v_{i,j}$ knows its rank $j$ and the size $s_i$ of $V_i$.}
Let $u$ be the device with $r(u) = i$\;\label{lll3}
Let $v = v_{i,1}$\;
 $u$ transmits $s(u)$ and $v$ listens\;
$v$ transmits $s_i$ and $u$ listens\; \label{lll4} 
  \leIf{$v$ hears a message from $u$}{$v$ sets  $r(v) \leftarrow s(u)+1$\;\label{lll6}
   }{$v$ sets $r(v) \leftarrow i$ and $s(v) \leftarrow i$\label{lll7}}
  \lIf{$u$ hears a message from $v$}
  {$u$ sets $s(u) \leftarrow s(u)+s_i$} \label{lll5}
  \For{$j= 1, 2, \ldots, s_i - 1$ \label{lll22}}
{$v_{i,j}$ transmits $r(v_{i,j})$ and $v_{i,j+1}$ listens\;
  $v_{i, j+1}$ sets $r(v_{i,j+1}) \leftarrow r(v_{i,j}) + 1$. 
 }\label{lll8}
Let $u'$ be the device with $r(u') = i$\;\label{lll9}
Let $v'$ be the device with $r(v') = i+1$\;
$u'$ transmits $s(u')$ and $v'$ listens\;
$v'$ sets $s(v') \leftarrow s(u')$\;\label{lll10}
}
\ForEach{$w \in V$}
{\leIf{$r(w) \geq \ceil{N/b}+1$}
{$\rank(w) \leftarrow r(w) - \ceil{N/b}$\;}
{$\rank(w) \leftarrow \bot$}
}
 \caption{An improved leader election algorithm.}\label{alg:improved}
\end{algorithm}

\paragraph{Algorithm.}  \cref{alg:improved} is the result of applying the following modifications to \cref{alg:simple}.

Let us write $V_i = \{ v \in V \ | \ b(i-1)+1 \leq \ID(v) \leq \min\{N, bi\} \}.$
The main source of inefficiency of \cref{alg:simple} is that when the device $u$ with $r(u) = i$ recruits the devices  $v \in V_i$ to join the group $S$, it needs to go through the identifiers $j= b(i-1)+1, b(i-1)+2, \ldots, \min\{N, bi\}$ one by one, and this costs $O(b) = O\left(c^{-1}\right)$ energy. 

This energy cost can be reduced to $O(1)$ if the devices in $V_i$ have been arranged in some order $V_i = (v_{i,1}, v_{i,2}, \ldots, v_{i, s_i})$, and each $v_{i,j}$ knows its rank $j$ and the size $s_i$ of $V_i$ (\cref{lll2}).

Specifically, if such an ordering is given, then $u$ only needs to communicate with $v = v_{i,1}$ (\Crefrange{lll3}{lll5}). After that, we can add all devices $v \in V_i$ to the group $S$ by having $v_{i,j}$ communicate with $v_{i,j+1}$ for each $j = 1, 2, \ldots, s_i - 1$ (\crefrange{lll22}{lll8}).

\paragraph{Analysis.}
Similar to the analysis in \cref{sect-simple-alg}, as long as $n - \ceil{N/b} > 0$, \cref{alg:improved} returns a group of size at least one, and so a leader is elected. Hence setting $b = 
 O(N/n) = O\left(c^{-1}\right)$ works.

Due to the modifications, the  energy   complexity of \cref{alg:improved} is $O(1)$, except the part of arranging the devices $V_i$ in some order $V_i = (v_{i,1}, v_{i,2}, \ldots, v_{i, s_i})$ (\cref{lll2}).
For each $i = 1, 2, \ldots, \ceil{N/b}$, the task of this part can be solved using the \emph{census} algorithms of~\cite{ChangKPWZ17}, where  some device announces a list of the IDs of all devices at the end of the computation.
The time complexity of the census algorithms of~\cite{ChangKPWZ17} is $O(b)$. The energy complexity is $O(\log b)$ in $\{\receivercd, \nocd\}$~\cite[Lemma 15]{ChangKPWZ17}, and it is $O\left(\log \log b\right)$ in $\{\strongcd, \sendercd\}$~\cite[Theorem 5]{ChangKPWZ17}.

Hence the time complexity of \cref{alg:improved} is $T = O(N)$, and the energy complexity  of \cref{alg:improved} is $E = O\left(\max\left\{1, \log c^{-1}\right\}\right)$ in $\{\receivercd$, $\nocd\}$ and   is $E = O\left(\max\left\{1, \log \log c^{-1}\right\}\right)$ in $\{\strongcd$, $\sendercd\}$.

\paragraph{Exponential search.} We have proved \cref{thm:ub-main} for the case $n$ is known. To extend our algorithm to the case when $n$ is unknown, we use an exponential search, as in \cref{sect-simple-alg}. For the case of $\{\receivercd, \nocd\}$, to ensure that the overall energy complexity is still $O\left(\max\left\{1, \log c^{-1}\right\}\right)$, in the $i$th attempt of \cref{alg:improved}, we use 
\begin{align*}
   b &= \min\left\{N, 2^{2^{i}}\right\}. \\
   \intertext{Similarly, for the case of $\{\strongcd, \sendercd\}$, to ensure that the overall energy complexity is still $O\left(\max\left\{1, \log \log c^{-1}\right\}\right)$, in the $i$th attempt of \cref{alg:improved}, we use }
   b &= \min\left\{N, 2^{2^{2^i}}\right\}.
\end{align*}

As long as $2 \leq n \leq N$, the number of attempts until a leader is elected is $O\left(\max\left\{1, \log \log c^{-1}\right\}\right)$ for the case of $\{\receivercd$, $\nocd\}$, and it is $O\left(\max\left\{1, \log \log \log c^{-1}\right\}\right)$ for the case of $\{\strongcd$, $\sendercd\}$.
If no leader is elected in all attempts, then we know that the number of devices is one, and so the unique device in the network can elect itself as the leader.
The overall time complexity is  $\omega(N)$ if $c = o(1)$. 
To reduce the time complexity back to $O(N)$, we consider the following trick.

Recall the  procedure described in \cref{sect-intro} that reduces the $\ID$ space from $\{1, 2, \ldots, N\}$ to $\{1, 2, \ldots, \ceil{N/2}\}$ in time $O(N)$  and  energy $O(1)$.  The set $S$ initially contains all devices $u$ whose $\ID(u)$ is an odd number. For $i = 1,2, \ldots, \ceil{N/2}$,  the device $u$  with $\ID(u) = 2i-1$  transmits a dummy message,  the device $v$ with $\ID(v) = 2i$ listens, and $v$ adds itself to $S$ if $v$ {does not hear} a message from $u$.
It is clear that  if there is at least one device $w$ with $\ID(w) \in\{2i-1, 2i\}$, then exactly one such device is in $S$. 
The $\ID$ space is reduced to  $\{1, 2, \ldots, \ceil{N/2}\}$ by having
 each $w \in S$ reset their identifier $\ID(w) \leftarrow \ceil{\ID(w) / 2}$. 
 
 If a device $w \in V$ is not in $S$, then $w$ drops out of the network.
 The density of the new instance 
 $S$ is at least the density of the old instance $V$. That is, we have $n' / N' \geq n/ N$, where $n' = |S|$ and $N' = \ceil{N/2}$.

 When we do the exponential search, after each attempt of \cref{alg:improved}, we run the above procedure to reduce the $\ID$ space. Hence the time complexity of the $i$th attempt becomes $O(N/2^{i-1})$, and  the overall time complexity becomes $O(N)$, as required in \cref{thm:ub-main}.
\section{Lower Bounds}
In this section we prove two energy lower bounds and two time lower bounds for leader election algorithms. We emphasize that these lower bounds work with different prior knowledge about the network size $|V|$. For example, the $\Omega(\log \frac{N}{n})$ energy lower bound of \cref{thm:energy_lb_log_N_n} works with the setting where the number of devices $n = |V|$ is a global knowledge, and the $\Omega(\min\{n, \frac{N}{2^k}\})$ time lower bound of \cref{thm:timelb1} works with the setting where an upper bound $n \geq |V|$ on the number of devices is a global knowledge.

In our lower bound proofs, we consider an easier success criterion. We say a leader election algorithm is successful in some time slot if a message is successfully transmitted, i.e., exactly one device transmits and at least one device listens in that time slot. 

For any leader election algorithm $\mathcal{A}$, we can always add an extra time slot in the end and let the leader transmit and let all other devices listen. Hence we can work with the above easier  success criterion when proving lower bounds.

\subsection{An  \texorpdfstring{$\Omega(\log \frac{N}{n})$}{Omega(log N/n)} Energy Lower Bound for  \texorpdfstring{$\receivercd$ and $\nocd$}{Receiver-CD and No-CD}}
In this section we prove an energy lower bound of $\Omega(\log \frac{N}{n})$ in the $\receivercd$ model. This lower bound also holds in the weaker $\nocd$ model.

\begin{theorem}
[Energy lower bound of $\Omega(\log \frac{N}{n})$]\label{thm:energy_lb_log_N_n}
Let $N$ be the size of the ID space. Let $\mathcal{A}$ be a deterministic leader election algorithm in the $\receivercd$ model that works for any set of devices $V$ of size $n$. Let $k$ denote the energy complexity of $\mathcal{A}$. Then we have
\begin{align*}
    k \geq \Omega(\log \frac{N}{n}).
\end{align*}
\end{theorem}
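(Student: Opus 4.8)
The plan is to show that a too-small energy budget forces the existence of a valid input of size exactly $n$ on which \emph{no successful transmission} ever occurs, contradicting the relaxed success criterion established above. Concretely I will prove that $2^k < N/n$ is impossible, which immediately gives $k \geq \log(N/n) = \Omega(\log \frac{N}{n})$. Since $\receivercd$ is stronger than $\nocd$, the same bound then holds in $\nocd$.

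The central object is the \emph{silent run} of each device. In $\receivercd$ a transmitter receives no feedback, so the behavior of device $i$ is completely determined by its $\ID$ and the feedback it obtains at the rounds where it \emph{listens}; feeding it $\silence$ at every such round defines a canonical deterministic run. Let $L_i$ and $T_i$ denote, respectively, the listen-rounds and transmit-rounds that occur among the first $k$ active (non-idle) rounds of this silent run, so that $L_i$ and $T_i$ are disjoint and $|L_i| + |T_i| \leq k$. Truncating at $k$ active rounds is a purely syntactic device to enforce this inequality; it loses nothing, because the energy bound will eventually force every device in the instance I build to be active at most $k$ times in total. I will then locate many mutually compatible devices by an averaging argument: color each round independently and uniformly by $\chi(t) \in \{\listen, \transmit\}$, and call $i$ \emph{consistent with $\chi$} if $\chi(t) = \listen$ for all $t \in L_i$ and $\chi(t) = \transmit$ for all $t \in T_i$. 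This occurs with probability exactly $2^{-|L_i \cup T_i|} \geq 2^{-k}$, so the expected number of consistent devices is at least $N \cdot 2^{-k}$, and some fixed $\chi$ is consistent with a set $G$ of at least $N/2^k$ devices. Under the assumption $2^k < N/n$ we get $|G| > n$, and I take any $V \subseteq G$ with $|V| = n$ as the adversarial input.

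The final step verifies, by induction on the rounds, that on input $V$ every device follows its silent run and every listener hears $\silence$, so that no round ever has both a transmitter and a listener; since a successful transmission requires exactly one transmitter \emph{and} at least one listener in the same slot, this shows $V$ admits no successful transmission. The inductive core examines the \emph{earliest} hypothetical round $t^\ast$ at which some listener $i \in V$ would hear non-silence, i.e.\ some $j \in V$ transmits while $i$ listens. By minimality of $t^\ast$ all feedback through round $t^\ast - 1$ is silence, so every device's history matches its silent run and both $i$ and $j$ act at $t^\ast$ exactly as in their silent runs; moreover the energy bound, valid because $|V| = n$, guarantees that each of $i$ and $j$ is active at most $k$ times in the fixed execution of $\mathcal{A}$ on $V$, so $t^\ast$ lies among the first $k$ active rounds of each. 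Hence $t^\ast \in L_i$ forces $\chi(t^\ast) = \listen$ while $t^\ast \in T_j$ forces $\chi(t^\ast) = \transmit$, a contradiction. Therefore no such $t^\ast$ exists, silence is maintained forever, and $V$ violates the success requirement.

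The main obstacle is precisely this consistency check: the silent runs are hypothetical objects, and one must avoid a circular dependence between the claim ``the actual run equals the silent run'' and the invocation of the energy bound. The clean resolution is to analyze the earliest potential violation $t^\ast$, at which point the run is provably still silent; this lets me invoke the genuine energy bound on the concrete execution of $\mathcal{A}$ on $V$ to place $t^\ast$ within the first $k$ active rounds of the relevant devices, after which the coloring $\chi$ rules the violation out. With this argument in place, the inequality $2^k \geq N/n$ follows, completing the $\Omega(\log \frac{N}{n})$ lower bound.
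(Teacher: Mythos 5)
Your proposal is correct and takes essentially the same route as the paper's proof: define each device's silent-run action sequence, color the time slots uniformly at random by $\{\listen,\transmit\}$ so that in expectation $N/2^{k}$ devices are consistent with a single coloring, fix such a coloring, and observe that on any size-$n$ subset of the consistent devices no slot can have both a transmitter and a listener, forcing $n > N/2^{k}$ and hence $k \geq \Omega(\log \frac{N}{n})$. The only difference is one of care rather than substance: your truncation of $L_i, T_i$ to the first $k$ active rounds together with the earliest-violation induction makes fully rigorous the consistency step (that the actual execution on $V$ tracks the silent runs and that the energy bound, which is only guaranteed on genuine size-$n$ executions, applies where needed), a point the paper's proof passes over implicitly.
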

\begin{proof}
Let $t$ denote the time complexity of $\mathcal{A}$.
For any $j \in [N]$ and $i \in [t]$, define $a_i(j) \in \{\idle, \listen, \transmit\}$ as the action of the device $v$ with $\ID(v) = j$ in the $i$th time slot if in the first $i-1$ time slots whenever $v$ listens it hears silence. Since we are in the $\receivercd$ model, whenever $v$ transmits it receives no feedback.

Consider a random sequence $\{b_i\}_{i=1}^t$ where each $b_i$ is uniformly randomly sampled from $\{\listen, \transmit\}$. We say $\{b_i\}_{i=1}^t$ matches a sequence $\{a_i(j)\}_{i=1}^t$ if for any $i$, either $a_i(j) = \idle$, or $a_i(j) = b_i$. Since there are at most $k$ $\listen$ or $\transmit$ actions in the sequence $\{a_i(j)\}_{i=1}^t$, it is easy to see that
\[
\Pr[\{b_i\}_{i=1}^t \text{~matches~} \{a_i(j)\}_{i=1}^t] = \frac{1}{2^k}.
\]
Thus in expectation $\{b_i\}_{i=1}^t$ matches $\frac{N}{2^k}$ number of action sequences. This means there must exist some $\{b_i\}_{i=1}^t$ that matches at least $\frac{N}{2^k}$ number of action sequences. Let $V$ denote the set of devices that this $\{b_i\}_{i=1}^t$ matches with, and we have $|V| \geq \frac{N}{2^k}$.

The algorithm $\mathcal{A}$ cannot be correct on any set $V' \subseteq V$, because in any time slot all the devices in $V'$ either all perform actions in $\{\listen, \idle\}$, or all perform actions in $\{\transmit, \idle\}$, so there does not exist a time slot where exactly one device transmits and at least one device listens. Since the algorithm $\mathcal{A}$ is correct on all sets of size $n$, we must have $n > |V| \geq \frac{N}{2^k}$, which gives $k \geq \Omega(\log \frac{N}{n})$.
\end{proof}

\subsection{An  \texorpdfstring{$\Omega(k N^{1/k})$}{Omega(k N to the power of 1/k)} Time Lower Bound for All Models}

In this section we prove a time lower bound of $\Omega(k N^{1/k})$ when the energy complexity is $k$. We prove this lower bound in the strongest $\strongcd$ model, so it automatically applies to other models.


Since our lower bound works for the case of exactly two devices, it also applies to more general settings such as $|V| \leq n$ (an upper bound $n$ on the number of devices is given) and  $1 \leq |V| \leq N$ (no knowledge on the number of devices).

\begin{theorem}[Time lower bound of $\Omega(k N^{1/k})$]\label{thm:timelb2}
Let $N$ be the size of the ID space. Let $\mathcal{A}$ be a deterministic leader election algorithm in the $\strongcd$ model that works for any set of devices $V$ with $|V| = 2$. Let $k$ denote the energy complexity of $\mathcal{A}$, and let $t$ denote the time complexity of $\mathcal{A}$. Then we have
\begin{align*}
    t \geq \Omega(k N^{1/k}).
\end{align*}
\end{theorem}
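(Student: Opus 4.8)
The plan is to prove the time lower bound $t \geq \Omega(k N^{1/k})$ via a counting/adversary argument based on the structure of deterministic executions in the $\strongcd$ model when there are exactly two devices. The key observation is this: for a single device $v$ with $\ID(v) = j$ running in isolation (i.e., hearing only silence whenever it listens), its behavior is a fixed deterministic sequence of actions. Since the algorithm uses at most $k$ energy, this ``solo trajectory'' has at most $k$ non-idle time slots, and crucially, the first time $v$ takes a non-idle action is some time slot determined by $j$. I would define, for each $\ID$ $j$, the signature of $v_j$'s solo execution as the ordered sequence of (time slot, action) pairs among its at most $k$ active slots.

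The core combinatorial idea is to bound how many distinct $\ID$s can have their \emph{first active slot} occur early. First I would partition the time horizon $[t]$ and argue that if many devices must first become active within a short prefix, a pigeonhole collision forces two devices to have identical (or conflicting) early behavior, so that the algorithm cannot distinguish or successfully elect a leader between them when they are paired. Concretely, consider pairing device $v_j$ with a ``silent partner'' or with another device whose solo trajectory is compatible; for the algorithm to succeed on the two-device instance $\{v_j, v_{j'}\}$, there must be some slot where exactly one transmits and at least one listens, which requires their trajectories to diverge. I would count the number of possible distinct ``active patterns'' a device can exhibit in its first $s$ time slots using at most $k$ energy budget: each active slot chooses a position among $s$ slots and an action, giving roughly $\binom{s}{k} \cdot 2^k$ or similar possibilities. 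Setting this count below $N$ forces two $\ID$s to share a pattern, and I would then construct the adversarial two-device instance from such a colliding pair.

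The main technical step is the optimization: I want to show that if $t < c \cdot k N^{1/k}$ for a suitable constant $c$, then the number of distinguishable solo trajectories (as functions of $\ID$) is strictly less than $N$, yielding a collision. The heart is balancing the budget: with energy $k$ and time $t$, the number of distinct trajectory ``shapes'' is at most something like $(t/k)^k$ or $\binom{t}{\le k}2^k$, and the bound $(t/k)^k \geq N$ rearranges to $t \geq k N^{1/k}$. I expect the cleanest route is to argue level by level: think of the devices' first active slots as defining a tree/partition, where at each of the $k$ energy ``levels'' the time axis gets subdivided, and to keep $N$ devices distinguishable each level must contribute a factor of $N^{1/k}$ in time, summing to $\Omega(k N^{1/k})$.

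The hard part will be making the distinguishability argument airtight in the $\strongcd$ model specifically, since $\strongcd$ is the \emph{strongest} model (sender-side and receiver-side collision detection both present), so two paired devices receive the most feedback and can diverge most easily. I must ensure the counting tracks not just when devices are active but the full branching induced by the feedback they could receive from each other, so that a genuine collision in solo trajectories truly implies the algorithm fails on the paired instance. I anticipate needing an inductive/recursive adversary that, at each energy level, restricts attention to a sub-interval of surviving $\ID$s whose devices behave identically up to that point, shrinking the candidate set by a factor tied to the time spent, and concluding that surviving past $k$ levels while keeping $\ge 2$ indistinguishable candidates forces $t = \Omega(k N^{1/k})$.
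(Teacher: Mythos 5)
Your overall skeleton (pigeonhole over action sequences with at most $k$ active slots, then the count $\sum_{i \le k}\binom{t}{i}2^i < N$ forcing a collision, rearranged to $t \ge \Omega(k N^{1/k})$) is exactly the paper's, but there is a genuine gap in the object you count, and you flag the difficulty yourself without resolving it. You count \emph{solo} trajectories, defined by the feedback rule ``silence whenever the device listens.'' In $\strongcd$ this is not the execution that a colliding pair realizes: a device transmitting alone hears its own message back, while in the two-device instance $\{v_j, v_{j'}\}$ a simultaneous transmission makes \emph{both} transmitters detect collision --- feedback that never occurs in either solo run. So two IDs with identical solo trajectories can, at their first common transmit slot, both receive $\noise$, and from then on their deterministic but ID-dependent behavior is unconstrained by the solo trajectories; they may legitimately diverge using their different ID bits (this is exactly how the $O(\log N)$-time binary-search algorithm behaves) and successfully elect a leader. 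Hence ``same solo signature $\Rightarrow$ the algorithm fails on the pair'' is false as stated, and your pigeonhole does not go through. Your fallback of tracking ``the full branching induced by the feedback they could receive from each other'' destroys the count: a protocol tree branching on up to three feedbacks per active slot gives vastly more than $\binom{t}{k}2^k$ objects, and the arithmetic no longer yields $kN^{1/k}$.

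The fix is small but essential, and it is the paper's key move: define the reference sequence $a_i(j)$ as the action of device $j$ at slot $i$ under the assumption that it hears $\silence$ whenever it listens \emph{and detects collision whenever it transmits}. This feedback pattern is self-consistent for a twin pair: if $a_i(j) = a_i(j')$ for all $i$, then by induction on $i$ the two devices always act identically when run together --- both idle, both listen (zero transmitters, so $\silence$), or both transmit (two transmitters, so $\noise$) --- so the realized feedback matches the assumed one forever, and no slot ever has exactly one transmitter, violating even the paper's weakened success criterion (exactly one transmitter and at least one listener). With this corrected definition all $N$ reference sequences must be pairwise distinct, and your intended count $N \le \sum_{i=1}^{k}\binom{t}{i}2^i \le 2(2et/k)^k$ (valid for $k \le t/2$, with the regime $k = \Theta(t)$, i.e.\ $k = \Omega(\log N)$, trivial) yields $t \ge \Omega(k N^{1/k})$. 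Your alternative ``recursive adversary'' sketch could also be made to work, but it amounts to the same idea: the surviving ID set must be kept consistent with the silence/collision feedback pattern at every level, which is precisely the corrected reference execution.
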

\begin{proof}

For any $j \in [N]$ and $i \in [t]$, define $a_i(j) \in \{\idle$, $\listen$, $\transmit\}$ as the action of the device $v$ with $\ID(v) = j$ in the $i$th time slot if in the first $i-1$ time slots whenever $v$ listens it hears silence, and whenever $v$ transmits it detects collision.
Suppose there exist two different IDs $j$ and $j'$ such that $a_i(j) = a_i(j')$ for all $i \in [t]$. Let $v$ and $v'$ be the devices with $\ID(v) = j$ and $\ID(v') = j'$. The algorithm $\mathcal{A}$ cannot be correct on the set $V = \{v, v'\}$ because $v$ and $v'$ will follow the same action sequence and always transmit or listen at the same time. 
Thus each ID $j \in [N]$ must correspond to a unique action sequence $\{a_i(j)\}_{i=1}^t$. 

Since there are at most $k$ $\listen$ or $\transmit$ actions in the sequence of length $t$, there are in total $\sum_{i=1}^k \binom{t}{i} \cdot 2^i$ possible sequences. When $k\leq t/2$, we have
\begin{align*}
N \leq \sum_{i=1}^k \binom{t}{i} \cdot 2^i
\leq \sum_{i=1}^k \binom{t}{k} \cdot 2^i
\leq \binom{t}{k} \cdot 2^{k+1}
\leq \frac{2(2et)^k}{k^k}.
\end{align*}
Therefore we get $t \geq \Omega(k N^{\frac{1}{k}})$. (When $k=\Theta(t)$, $k=\Omega(\log N)$ and the bound is trivial.)
\end{proof}

\subsection{An  \texorpdfstring{$\Omega(\min\{n, \frac{N}{2^k}\})$}{Omega(min{n, N/(2 to the power of k)})} Time Lower Bound for  \texorpdfstring{$\strongcd$ and $\receivercd$}{Strong-CD and Receiver-CD}}
In this section we prove a time lower bound of $\Omega(\min\{n, \frac{N}{2^k}\})$ for $\strongcd$ and $\receivercd$, generalizing the $\Omega(n)$ lower bound for $\sendercd$ and $\nocd$ shown in \cite{JurdzinskiKZ02podc}. It suffices to prove the new lower bound in  
$\strongcd$. Our proof combines the proof idea of the $\Omega(n)$ lower bound in \cite{JurdzinskiKZ02podc} and the proof idea of \cref{thm:energy_lb_log_N_n}.

\begin{theorem}[Time lower bound of $\Omega(\min\{n, \frac{N}{2^k}\})$]\label{thm:timelb1}
Let $N$ be the size of the ID space. Let $\mathcal{A}$ be a deterministic leader election algorithm in the $\strongcd$ model that works for any set of devices $V$ of size at most $n$. Let $k$ denote the energy complexity of $\mathcal{A}$, and let $t$ denote the time complexity of $\mathcal{A}$. Then we have
\begin{align*}
    t \geq \Omega(\min\{n, \frac{N}{2^k}\}).
\end{align*}
\end{theorem}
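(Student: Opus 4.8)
The plan is to combine the random-hallucination argument of \cref{thm:energy_lb_log_N_n} with a parity (linear-algebraic) selection step that plays the role of the $\Omega(n)$ bound of~\cite{JurdzinskiKZ02podc}. As in the proof of \cref{thm:timelb2}, for each $j \in [N]$ I define the canonical action sequence $\{a_i(j)\}_{i=1}^t$ obtained by simulating the device with $\ID(v) = j$ under the hallucinated feedback ``silence whenever it listens and collision whenever it transmits''. I then draw a uniformly random $b = (b_1,\dots,b_t) \in \{\listen,\transmit\}^t$ and call $j$ \emph{matched} if $a_i(j) \in \{\idle, b_i\}$ for all $i$; since each sequence has at most $k$ non-idle entries, each $j$ is matched with probability at least $2^{-k}$, so some fixed $b$ is matched by a set $V^\ast$ of at least $N/2^k$ identifiers. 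For a transmit slot $i$ (i.e.\ $b_i = \transmit$) let $T_i = \{j \in V^\ast : a_i(j) = \transmit\}$; note that at listen slots every matched device only idles or listens, and at transmit slots it only idles or transmits.

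The key reduction is the following: for any $V \subseteq V^\ast$ such that $|V \cap T_i| \neq 1$ at every transmit slot, the hallucination is self-consistent on $V$ and hence equals the real execution. Indeed, at a listen slot no device of $V$ transmits, so every listener hears silence; at a transmit slot the number of transmitting devices of $V$ is $0$ or at least $2$, so every transmitter detects collision, exactly matching the hallucinated feedback. In this real execution there is no successful transmission at all, because listen slots have no transmitter and transmit slots have no listener. Thus any such $V$ with $2 \le |V| \le n$ is an instance on which $\mathcal{A}$ fails, contradicting correctness; it therefore remains to produce such a $V$ whenever $t$ is small.

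To find $V$ I work over $\mathbb{F}_2$. Let $m \le t$ be the number of transmit slots and associate with each $j \in V^\ast$ the vector $\mathbf{v}_j \in \mathbb{F}_2^{m}$ whose $i$-th coordinate is $1$ iff $j \in T_i$. If $V$ is chosen so that $\sum_{j \in V} \mathbf{v}_j = 0$, then every $|V \cap T_i|$ is even and in particular never equal to $1$. If at least two devices of $V^\ast$ never transmit (their $\mathbf{v}_j$ are zero), I simply take those two. Otherwise all but at most one of the $\ge N/2^k - 1$ vectors are nonzero, and as soon as there are more than $m$ of them they are linearly dependent; a minimal dependent subset (a circuit of the associated linear matroid) is a set $V$ with $\sum_{j\in V}\mathbf{v}_j = 0$ satisfying $2 \le |V| \le m+1 \le t+1$, since all vectors are nonzero and a circuit has size at most the rank plus one. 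Hence a failing instance exists whenever $N/2^k - 1 > t$ and $t + 1 \le n$; negating this yields $t \ge \min\{n, N/2^k\} - O(1) = \Omega(\min\{n, N/2^k\})$.

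The step I expect to be the main obstacle is producing the small set $V$ with the no-singleton property. A naive forward sweep that deletes the lone transmitter at each offending slot fails, because a later deletion can turn a previously safe slot (with two transmitters) into a singleton, triggering an uncontrolled cascade of further deletions. Passing to the parity condition $\sum_{j\in V}\mathbf{v}_j = 0$ sidesteps the cascade entirely by enforcing even intersections at all slots simultaneously, and the matroid-circuit bound is precisely what converts the ``$\Omega(n)$'' flavor of~\cite{JurdzinskiKZ02podc} into the support size $\le t+1$ that the network-size cap $|V| \le n$ requires. The remaining care is in the boundary cases: at most one never-transmitting device, and the trivial regime $t \ge \min\{n, N/2^k\}$ where the bound holds outright.
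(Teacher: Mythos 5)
Your proposal is correct, and the first half (hallucinated action sequences $\{a_i(j)\}$, the random sequence $\{b_i\}$, and the matched set of size at least $N/2^k$) coincides exactly with the paper's; but from that point on you take a genuinely different route. The paper runs a peeling argument on a subset $V'$ of size $\min\{n, \frac{N}{2^k}\}$: it lets $t_\ell$ be the \emph{earliest} slot at which exactly one device of the current active set $A_\ell$ transmits (such a slot must exist, since otherwise the hallucination is self-consistent on $A_\ell$ and no message is ever delivered, contradicting correctness), deletes that transmitter to form $A_{\ell+1}$, and shows the slots $t_1, t_2, \ldots$ are pairwise distinct; this charges each of the $|V'|$ deletions to a distinct slot and yields $t \geq \min\{n, \frac{N}{2^k}\}$ with no additive loss. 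Note that your stated worry about cascading deletions applies to the strategy of reaching a singleton-free set by local repairs, but the paper avoids the issue differently than you do: it never needs a singleton-free endpoint at all, only the distinctness of the charged slots, so a deletion turning a two-transmitter slot into a singleton slot is harmless (that slot simply becomes some later $t_{\ell'}$). Your route instead manufactures a single explicit failing instance: the $\mathbb{F}_2$ circuit among the indicator vectors of the $T_i$'s gives a set $V$ with $2 \leq |V| \leq m+1 \leq t+1$ meeting every transmit slot evenly, hence a fully self-consistent execution with no successful transmission, and the support bound $|V| \leq t+1$ is exactly what lets you invoke correctness on sets of size at most $n$; your case analysis (at most one all-zero vector, the trivial regime) and the additive $O(1)$ loss are all fine for the asymptotic statement. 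What each approach buys: yours is more structural and constructive, exhibiting one small witness instance with even intersections rather than an adaptively defined family of nested executions; the paper's peeling gives the bound with clean constants and, importantly, transfers immediately to \cref{cor:time_lb_N_2_k}, where $\mathcal{A}$ is only assumed correct on sets of size \emph{at least} $n$ (one simply peels only half of $V$ so all active sets stay large), whereas your small-support circuit witness would need additional work in that regime, e.g., taking unions of disjoint circuits to pad the failing instance up to size $n$.
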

\begin{proof}
\ifdefined\iscamera
For any $j \in [N]$ and $i \in [t]$, define $a_i(j) \in \{\idle$, $\listen$, $\transmit\}$ as the action of the device $v$ with $\ID(v) = j$ in the $i$th time slot if in the first $i-1$ time slots whenever $v$ listens it hears silence, and whenever $v$ transmits it detects collision. Using a similar argument as the proof of \cref{thm:energy_lb_log_N_n}, there exists a sequence $\{b_i\}_{i=1}^t$ that matches at least $\frac{N}{2^k}$ number of action sequences. Let $V$ denote the set of devices 
matching $\{b_i\}_{i=1}^t$. We have $|V| \geq \frac{N}{2^k}$. Let $V'$ be any subset of $V$ of size $\min \{n, \frac{N}{2^k}\}$. The algorithm $\mathcal{A}$ is correct on the set $V'$ and any subset of $V'$ since they all have size $\leq n$.

\else
For any $j \in [N]$ and $i \in [t]$, define $a_i(j) \in \{\idle, \listen, \transmit\}$ as the action of the device $v$ with $\ID(v) = j$ in the $i$th time slot if in the first $i-1$ time slots whenever $v$ listens it hears silence, and whenever $v$ transmits it detects collision.

Consider a random sequence $\{b_i\}_{i=1}^t$ where each $b_i$ is uniformly randomly sampled from $\{\listen, \transmit\}$. We say $\{b_i\}_{i=1}^t$ matches a sequence $\{a_i(j)\}_{i=1}^t$ if for any $i$, either $a_i(j) = \idle$, or $a_i(j) = b_i$. Since there are at most $k$ $\listen$ or $\transmit$ actions in the sequence $\{a_i(j)\}_{i=1}^t$, it is easy to see that
\[
\Pr[\{b_i\}_{i=1}^t \text{~matches~} \{a_i(j)\}_{i=1}^t] = \frac{1}{2^k}.
\]
Thus in expectation $\{b_i\}_{i=1}^t$ matches $\frac{N}{2^k}$ number of action sequences. This means there must exist some $\{b_i\}_{i=1}^t$ that matches at least $\frac{N}{2^k}$ number of action sequences. Let $V$ denote the set of devices that this $\{b_i\}_{i=1}^t$ matches with, and we have $|V| \geq \frac{N}{2^k}$. Let $V'$ be any subset of $V$ of size $\min \{n, \frac{N}{2^k}\}$. The algorithm $\mathcal{A}$ is correct on the set $V'$ and any subset of $V'$ since they all have size $\leq n$.
\fi

Next, we define a sequence of active sets $V' = A_1 \supseteq A_2 \supseteq \cdots \supseteq A_{|V'|} \neq \emptyset$ as follows: for each $A_{\ell}$, let $t_{\ell}$ be the minimum time slot such that there is exactly one device $v_{\ell} \in A_{\ell}$ with $a_{t_{\ell}}(j) = \transmit$ where $j = \ID(v_{\ell})$. Such a time slot $t_{\ell}$ must exist because the algorithm $\mathcal{A}$ is correct on set $A_{\ell}$. Recall that we consider an easier success criterion where a leader election algorithm is successful if there is a time slot where exactly one device transmits and at least one device listens.

Then we set $A_{\ell+1} = A_{\ell} \backslash \{v_{\ell}\}$. It remains to show that $t_{\ell} \neq t_{\ell'}$ for any $\ell' > \ell$. This means there exist $|V'|$ different time slots $t_1 \neq t_2 \neq \cdots \neq t_{|V'|}$, so we must have $t \geq |V'| = \min\{n, \frac{N}{2^k}\}$.

We prove $t_{\ell} \neq t_{\ell'}$ for $\ell' > \ell$ by contradiction. Suppose $t_{\ell} = t_{\ell'}$. When executing the algorithm $\mathcal{A}$ on set $A_{\ell}$, before time $t_{\ell}$, either all devices perform actions in $\{\listen, \idle\}$, or at least two devices $\transmit$, so a device $v$ in $A_{\ell}$ with ID $j$ will always perform the same action as the sequence $\{a_i(j)\}_{i=1}^{t_{\ell}}$. The same argument also holds for set $A_{\ell'}$. Since $A_{\ell'} \subseteq A_{\ell}$, we have $v_{\ell'} \in A_{\ell}$, so both $v_{\ell}$ and $v_{\ell'}$ will $\transmit$ on time $t_{\ell}$ when executing the algorithm $\mathcal{A}$ on set $A_{\ell}$. This contradicts the definition of $t_{\ell}$. Thus we must have $t_{\ell} \neq t_{\ell'}$. This finishes the proof.
\end{proof}

Using a  similar proof, we obtain the following corollary which considers algorithms that work for sets of devices of size \emph{at least} $n$, while 
in \cref{thm:timelb1} the algorithm works for sets of size at most $n$.
\begin{corollary}\label{cor:time_lb_N_2_k}
Let $N$ be the size of the ID space. Let $\mathcal{A}$ be a deterministic leader election algorithm in the $\strongcd$ model that works for any set of devices $V$ of size at least $n$. Let $k$ denote the energy complexity of $\mathcal{A}$, and let $t$ denote the time complexity of $\mathcal{A}$. Assume $k$ satisfies that $n < \frac{N}{2^{k+1}}$. Then we have
\begin{align*}
    t \geq \frac{N}{2^{k+1}}.
\end{align*}
\end{corollary}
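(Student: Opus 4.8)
The plan is to adapt the proof of \cref{thm:timelb1} with the main modification being how the active set is constructed, since now we must work with algorithms guaranteed correct only on sets of size \emph{at least} $n$ rather than \emph{at most} $n$. First I would reuse verbatim the first part of the \cref{thm:timelb1} argument: for each $j \in [N]$ and $i \in [t]$, define the canonical action sequence $a_i(j)$ under the assumption that every listen hears silence and every transmit detects collision, then apply the probabilistic matching argument to extract a sequence $\{b_i\}_{i=1}^t$ and a set $V$ with $|V| \geq \frac{N}{2^k}$ such that every device in $V$ matches $\{b_i\}_{i=1}^t$. The hypothesis $n < \frac{N}{2^{k+1}}$ guarantees $|V| \geq \frac{N}{2^k} > 2n$, so $V$ is large.

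Next I would select the active set more carefully than in \cref{thm:timelb1}. Since $\mathcal{A}$ is only promised to be correct on sets of size at least $n$, I cannot peel off devices one at a time down to a singleton while maintaining correctness. Instead I would set the initial active set $A_1 = V'$ to be a subset of $V$ of size exactly $\frac{N}{2^{k+1}}$ (which is possible since $|V| \geq \frac{N}{2^k}$), and run the same peeling process, defining $t_\ell$ as the first time slot in which exactly one device $v_\ell \in A_\ell$ transmits, and setting $A_{\ell+1} = A_\ell \setminus \{v_\ell\}$. The key point is that I only continue the peeling while $|A_\ell| \geq n$, i.e., for $\ell = 1, 2, \ldots, \frac{N}{2^{k+1}} - n + 1$. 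On each such $A_\ell$ the algorithm must be correct because $|A_\ell| \geq n$, so a distinguished transmitter $v_\ell$ and hence $t_\ell$ exists.

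Then I would reprove that all the $t_\ell$ are distinct using the identical contradiction argument from \cref{thm:timelb1}: as long as a device's behavior has not been disturbed (every listen hears silence, every transmit collides) it follows its canonical sequence $\{a_i(j)\}$, so if $t_\ell = t_{\ell'}$ with $\ell' > \ell$ then both $v_\ell$ and $v_{\ell'}$ would transmit simultaneously in $A_\ell$, contradicting that $v_\ell$ is the unique transmitter at $t_\ell$. This yields at least $\frac{N}{2^{k+1}} - n + 1$ distinct time slots, so $t \geq \frac{N}{2^{k+1}} - n + 1$. Using the assumption $n < \frac{N}{2^{k+1}}$, this already gives a positive bound, but to reach the clean stated bound $t \geq \frac{N}{2^{k+1}}$ I would instead take $A_1$ of size $\frac{N}{2^{k+1}} + n$ (still feasible since $|V| \geq \frac{N}{2^k} = \frac{N}{2^{k+1}} + \frac{N}{2^{k+1}} > \frac{N}{2^{k+1}} + n$) and peel until the active set first drops below $n$, producing $\frac{N}{2^{k+1}} + 1 > \frac{N}{2^{k+1}}$ distinct slots.

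The main obstacle I anticipate is bookkeeping the sizes correctly so that every active set in the peeling sequence genuinely has size at least $n$ (ensuring correctness of $\mathcal{A}$ applies) while simultaneously guaranteeing enough peeling steps to force $\frac{N}{2^{k+1}}$ distinct time slots; the interplay between the starting size, the stopping threshold $n$, and the target bound requires care, and this is exactly where the assumption $n < \frac{N}{2^{k+1}}$ is used to ensure the window of valid active-set sizes is wide enough.
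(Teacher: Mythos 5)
Your proposal is correct and takes essentially the same route as the paper's proof: extract the matched set $V$ with $|V| \geq \frac{N}{2^k}$ exactly as in \cref{thm:timelb1}, then peel devices one at a time while the active set stays at size at least $n$ (which is precisely where the hypothesis $n < \frac{N}{2^{k+1}}$ is used), with the pairwise distinctness of the slots $t_\ell$ established by the identical uniqueness-of-transmitter argument. The only difference is bookkeeping: the paper peels $|V|/2$ devices from all of $V$, observing $|A_\ell| \geq |V|/2 > n$ throughout, whereas you trim $A_1$ to size $\frac{N}{2^{k+1}} + n$ and peel down to size $n$ --- both choices are valid and yield $t \geq \frac{N}{2^{k+1}}$.
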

\ifdefined\isspaa
We postpone the proof to \cref{sec:pf_time_lb_N_2_k}.
\else
\begin{proof}
We follow the proof of \cref{thm:timelb1} to get the set $V$ of size $|V| \geq \frac{N}{2^k}$ that a random sequence $\{b_i\}_{i=1}^t$ matches with. We define a sequence of active sets $V = A_1 \supseteq A_1 \supseteq \cdots \supseteq A_{|V|/2} \neq \emptyset$ similar as before: for each $A_{\ell}$, let $t_{\ell}$ be the minimum time slot such that there is exactly one device $v_{\ell} \in A_{\ell}$ with $a_{t_{\ell}}(j) = \transmit$ where $j = \ID(v_{\ell})$. Since $|A_{\ell}| = |A_{\ell-1}| - 1$, we have $|A_1| \geq \cdots \geq |A_{|V|/2}| \geq |V|/2 > n$, which means $\mathcal{A}$ is correct on all $A_{\ell}$, so $t_{\ell}$ must exist and this set sequence is well-defined.

Using a similar argument as in the proof of \cref{thm:timelb1}, we have $t_{\ell} \neq t_{\ell'}$ for any $\ell \neq \ell'$. This means there  exist $|V|/2$ different time slots $t_1 \neq t_2 \neq \cdots \neq t_{|V|/2}$, so we  have $t \geq |V|/2 \geq \frac{N}{2^{k+1}}$.
\end{proof}
\fi

\subsection{An  \texorpdfstring{$\Omega(\log \log \frac{N}{n})$}{Omega(log log N/n)} Energy Lower Bound for  \texorpdfstring{$\strongcd$ and $\sendercd$}{Strong-CD and Sender-CD}}
In this section we prove an energy lower bound of $\Omega(\log \log \frac{N}{n})$ in the $\strongcd$ model. This bound also holds in the weaker $\sendercd$ model. We first make a useful definition. 

\begin{definition}[Potential active time slot]\label{def:potential_active_time_slot}
Let $N$ be the size of the ID space. Let $\mathcal{A}$ be a deterministic leader election algorithm in the $\strongcd$ model. Let $t$ be the time complexity of $\mathcal{A}$. For any $j \in [N]$ and $i \in [t]$, we define $g_i(j) = 0$ if the device with $\ID=j$ is always idle in the $i$th time slot no matter what feedback it receives in the first $i-1$ time slots. Otherwise we define $g_i(j) = 1$. (Here we only consider feedback of $\noise$ or $\silence$, and no message is successfully transmitted.)

If $g_i(j) = 1$, we say the $i$th time slot is a potential active time slot for the device with $\ID=j$. 
\end{definition}

In \cref{cla:number_of_potential_active_time_slots} we bound the number of potential active time slots of a device when running an algorithm with energy complexity $k$.
\begin{claim}[Number of potential active time slots]\label{cla:number_of_potential_active_time_slots}
If the algorithm $\mathcal{A}$ has energy complexity $k$, then each of the devices has at most $2^{k}$ potential active time slots.
\end{claim}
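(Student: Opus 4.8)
The plan is to model the behaviour of a single device $v$ with $\ID(v)=j$ as a rooted binary decision tree whose branching is governed solely by the feedback $v$ receives. The crucial feature of the $\strongcd$ model for this argument is that an idle device receives no feedback, so the only slots at which $v$ can acquire new information are the slots in which it is active (listens or transmits); and, since \cref{def:potential_active_time_slot} restricts attention to feedback in $\{\silence, \noise\}$, each such active slot offers a branching factor of at most $2$. I would distill this into one structural observation: the time position of the $m$-th active slot of $v$ is completely determined by the feedbacks received at its first $m-1$ active slots.

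Granting this observation, the counting is short. First I would note that every potential active time slot $i$ (that is, every $i$ with $g_i(j)=1$) arises, by definition, as an active slot along some feedback history consistent with $\silence/\noise$ feedback; along that history it is the $m$-th active slot for some $m$, and because $\mathcal{A}$ has energy complexity $k$ we have $m \le k$. Hence each potential active slot occurs as the $m$-th active slot for at least one feedback history with $m \in \{1, 2, \ldots, k\}$. By the structural observation, the number of distinct time positions the $m$-th active slot can occupy, as the feedback history varies, is at most the number of binary strings of length $m-1$, namely $2^{m-1}$. Summing over the ranks gives a total of at most $\sum_{m=1}^{k} 2^{m-1} = 2^{k}-1 \le 2^{k}$ potential active slots, as claimed.

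The step I would take most care with — the main obstacle — is justifying the structural observation that the position of the $m$-th active slot depends only on the first $m-1$ feedbacks. This rests on the device being deterministic together with the fact that it gleans no information while idle: between its $(m-1)$-th and $m$-th active slots $v$ is idle throughout, so its entire trajectory over that stretch, including the exact time at which it next becomes active, is fixed once the first $m-1$ feedbacks are known. I would formalize this by an induction on $m$, or equivalently by contracting each maximal run of idle slots into the edge joining two consecutive branching nodes of the decision tree, so that the tree has at most $k$ branching levels and its $m$-th level contains at most $2^{m-1}$ nodes. A minor point worth a sentence is that a transmitting slot actually offers at most one admissible feedback in $\{\silence,\noise\}$ (a lone transmitter would receive its own message, which is excluded), so the bound $2^{m-1}$ is only an overcount, which is harmless for an upper bound.
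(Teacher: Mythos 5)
Your proposal is correct and is essentially the paper's own argument: the paper also builds the protocol (decision) tree for a single device, notes that only $\transmit$/$\listen$ nodes branch (into $\silence$/$\noise$), bounds the branching nodes along at most $k$ active levels by $\sum_{i=1}^{k} 2^{i-1} = 2^k - 1$, and bounds the potential active slots by this count. Your ``structural observation'' that the position of the $m$-th active slot is fixed by the first $m-1$ feedbacks is exactly the contraction of idle runs into tree edges that the paper's counting implicitly uses, so there is nothing substantively different to compare.
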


\ifdefined\isspaa
We postpone the proof to \cref{sec:pf_number_of_potential_active_time_slots}.
\else
\begin{proof}
We use a protocol tree with height $t$ to describe all possible actions of a device $v$ when running the algorithm $\mathcal{A}$. If the action of a node is $\transmit$ or $\listen$, the feedback could be either $\noise$ or $\silence$, and the protocol tree has two branches following this node. 
If the action of a node is $\idle$, there is no feedback, and this node only has one child.

Since the device uses at most $k$ energy, there are at most $k$ $\transmit$ and $\listen$ actions along any path in the tree. Thus there are at most $\sum_{i=1}^{k} 2^{i-1} = 2^k - 1$ branching nodes. The number of potential active time slots is bounded by the number of branching nodes in the tree, so there are $\leq 2^k$ potential active time slots.
\end{proof}
\fi

We are ready to present the main lower bound 
of this section.
\begin{theorem}[Energy lower bound of $\Omega(\log \log \frac{N}{n})$]\label{thm:energy_lb_log_log_N_n}
Let $N$ be the size of the ID space. Let $\mathcal{A}$ be a deterministic leader election algorithm in the $\strongcd$ model that works for any set of devices $V$ of size at least $n$. Let $k$ denote the energy complexity of $\mathcal{A}$. Then we have
\begin{align*}
    k \geq \Omega(\log \log \frac{N}{n}).
\end{align*}
\end{theorem}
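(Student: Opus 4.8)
The plan is to mirror the structure of the $\Omega(\log \frac{N}{n})$ bound of \cref{thm:energy_lb_log_N_n}, but to replace its single-exponential counting by a \emph{doubly}-exponential one in which the energy $k$ enters only through the bound $2^{k}$ on the number of potential active time slots (\cref{cla:number_of_potential_active_time_slots}). The target is to show that, unless $k = \Omega(\log\log\frac{N}{n})$, the $N$ identifiers cannot all be ``resolved'': there must be a set of at least $n$ devices on which $\mathcal{A}$ never produces a unique transmission, contradicting correctness for sets of size at least $n$.

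First I would determinize behavior by fixing a feedback world, exactly as in the proof of \cref{thm:timelb2}: assume each device hears $\silence$ whenever it listens and detects $\noise$ whenever it transmits. In this world the action sequence $\{a_i(j)\}_{i=1}^{t}$ of the device with $\ID=j$ is a deterministic function of $j$. The crucial consistency observation is that this world is \emph{self-reproducing} on any set of devices sharing a common transmit schedule $\tau(j)=\{\, i : a_i(j)=\transmit \,\}$: on such a set, in every slot either all members transmit (so a $\noise$ occurs and every transmitter indeed detects $\noise$) or none do (so every listener indeed hears $\silence$), hence no slot ever carries a unique transmission. Thus any set of at least $n$ devices with a common transmit schedule is a set on which $\mathcal{A}$ \emph{fails}, and by \cref{cla:number_of_potential_active_time_slots} each $\tau(j)$ lies inside the at most $2^{k}$ potential active time slots of $j$, so $|\tau(j)|\le k$.

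The heart of the argument, and where collision detection must genuinely be exploited, is to bound the number of \emph{essentially distinct} behaviors by $2^{2^{O(k)}}$, a quantity independent of $t$. I would do this by an adversarial refinement / induction on the energy budget: maintain a homogeneous candidate set, process the execution in the fixed world, and split the current set at each slot where its members disagree (transmit vs.\ listen vs.\ idle), keeping the largest part. When a kept part transmits together or listens together, the strong-$\mathsf{CD}$ feedback ($\noise$ or $\silence$) is a single bit that, combined across the recursion, plays the role of the ``column'' coordinate in the $\sqrt{\cdot}$-grid behind the matching $O(\log\log N)$ upper bound. The goal is a recurrence of the form $\log\frac{N(k)}{n} \le 2\log\frac{N(k-1)}{n} + O(1)$, where $N(k)$ is the largest ID space admitting a correct energy-$k$ algorithm with no bad set of size at least $n$; unrolling it from a constant base case gives $\log\frac{N}{n}\le 2^{O(k)}$, i.e.\ $k=\Omega(\log\log\frac{N}{n})$. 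A pigeonhole/contradiction then finishes: if $N > (n-1)\,2^{2^{O(k)}}$, some behavior class contains at least $n$ devices sharing a transmit schedule, which is a bad set.

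The main obstacle is the \textbf{time-independence} of the final bound. A naive count of transmit schedules gives $\binom{t}{\le k}$, which depends on $t$ — indeed this is precisely why \cref{thm:timelb2} yields only a \emph{time} lower bound. Removing $t$ requires two things: (i) charging each refinement step to the potential active slots rather than to raw time, so that the number of effective branchings is governed by the $2^{k}$ bound of \cref{cla:number_of_potential_active_time_slots} and the square-rooting-per-energy-unit phenomenon above, rather than by $t$; and (ii) handling devices that stay idle for arbitrarily long stretches, which receive no feedback during their idle prefix and therefore cannot be synchronized by the adversary and threaten to contribute a spurious factor of $t$ (through the choice of ``first active slot''). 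Controlling this idle behavior so that it collapses into the same doubly-exponential recursion — e.g.\ by grouping on the first active action and absorbing the idle prefixes into the energy-reducing step — is the delicate part I expect to require the most care.
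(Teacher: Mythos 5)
Your setup is sound as far as it goes: fixing the self-fulfilling feedback world ($\silence$ for every listener, $\noise$ for every transmitter) determinizes the action sequences, any $n$ devices sharing a common transmit schedule do form a set on which $\mathcal{A}$ fails, and \cref{cla:number_of_potential_active_time_slots} correctly bounds per-device activity by $2^k$. But the entire difficulty of the theorem lies in the step you explicitly label a ``goal'': bounding the number of essentially distinct behaviors by $2^{2^{O(k)}}$ \emph{independently of $t$}. The recurrence $\log\frac{N(k)}{n}\le 2\log\frac{N(k-1)}{n}+O(1)$ is asserted, not derived, and the refinement procedure you propose does not yield it. Splitting a candidate set at each slot where its members disagree and keeping the largest part loses a constant factor per splitting slot, and the number of such slots is not controlled by $k$: each device individually has at most $2^k$ potential active slots, but different devices place them at different times, so over an ID space of size $N$ the refinement can branch at up to $\min\{t,\,N\cdot 2^k\}$ distinct slots, reintroducing exactly the dependence on $t$ you are trying to remove. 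The same factor of $t$ reappears in your fallback of grouping on the ``first active action,'' since the first active slot ranges over $[t]$. Both obstacles you flag --- time-independence and idle prefixes --- are the open steps, and as written the argument proves nothing beyond the $\binom{t}{\le k}$-type count that already underlies \cref{thm:timelb2}, which is why that count gives only a time lower bound.

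The paper closes this gap with a mechanism that has no counterpart in your sketch: an interplay between the energy reduction and a \emph{time} lower bound. Its \cref{lem:reduction_step} removes one potential active time slot at the cost of shrinking the ID space by a factor $m\cdot 2^{k+6}\le 2^{2k+6}$. It first isolates the set $D$ of ``dense'' slots in which more than $m\cdot 2^{k+3}$ devices are potentially active (so $|D|\le N/2^{k+3}$), and shows that at most $N/2$ IDs can be potentially active only inside $D$: otherwise compressing the algorithm to the slots of $D$ would give a correct algorithm with time complexity $|D|$, contradicting the $t\ge N/2^{k+2}$ bound of \cref{cor:time_lb_N_2_k}. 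A random sparsification (keeping each surviving ID independently with probability $1/(m\cdot 2^{k+4})$) then yields at least $N/(m\cdot 2^{k+6})$ devices each owning a \emph{private} sparse active slot, in which that device provably hears $\silence$ and its action can be hardwired to $\idle$. Iterating this $2^k$ times eliminates all potential activity, forcing $n>N/(2^{2k+6})^{2^k}$ and hence $k\ge\Omega(\log\log\frac{N}{n})$. To salvage your plan you would need a substitute for this dense/sparse-slot decomposition and the time-lower-bound contradiction; the ``square-rooting per energy unit'' intuition imported from the $O(\log\log N)$ upper bound does not by itself supply one.
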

We prove this theorem using the following reduction.
\begin{lemma}[A reduction step for \cref{thm:energy_lb_log_log_N_n}]\label{lem:reduction_step}
Consider the leader election problem in the $\strongcd$ model. Suppose there exists a deterministic algorithm $\mathcal{A}$ for ID space of size $N$, where $\mathcal{A}$ is correct on any set $V \subseteq [N]$ of size at least $n$, and when running $\mathcal{A}$ each device uses at most $k$ energy and has at most $m$ potential active time slots where $m \leq 2^k$. Assume that $n$ satisfies $n \leq \frac{N}{2^{k+2}}$.

Then there exists another deterministic algorithm $\mathcal{A}'$ for ID space of size $\frac{N}{m \cdot 2^{k+6}}$, where $\mathcal{A}'$ is correct on any set $V' \subseteq [\frac{N}{m \cdot 2^{k+6}}]$ of size at least $n$, and when running $\mathcal{A}'$ each device uses at most $k$ energy and has at most $m-1$ potential active time slots. 
\end{lemma}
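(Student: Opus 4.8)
The plan is to convert one unit of branching in the protocol tree into a shrinking of the ID space, in the spirit of the matching argument behind \cref{thm:energy_lb_log_N_n} and \cref{thm:timelb1}, but now exploiting the bound $m \leq 2^k$ on potential active time slots rather than only the energy budget $k$. First I would fix, for every $j \in [N]$, the action sequence $\{a_i(j)\}_{i=1}^t$ that the device with $\ID=j$ follows under the collision/silence convention of \cref{def:potential_active_time_slot} (listens hear $\silence$, transmits hear $\noise$); this path is well defined and contains at most $k$ non-idle entries. I would then draw a uniformly random $b \in \{\listen, \transmit\}^t$ and keep the set $M$ of IDs whose non-idle actions all agree with $b$. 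As in \cref{thm:energy_lb_log_N_n}, each ID survives with probability $2^{-|\text{active}(j)|} \geq 2^{-k}$, so the expectation argument yields a fixed $b$ retaining a set $M$ with $|M| \geq N/2^{k}$. The crucial consequence of matching is that whenever $\mathcal{A}$ runs on a subset $V' \subseteq M$, at every slot $i$ with $b_i = \listen$ the only awake devices \emph{listen}, so nobody transmits and the feedback is $\silence$; such slots can be deleted from the schedule without changing the execution on any $V' \subseteq M$.

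The key step is to delete all $\listen$-slots of $b$ (a global time-compression known to every device) and to show that each retained device thereby loses a potential active time slot. I would partition $M$ according to the action prescribed at the \emph{first} potential active time slot $f(j)$; by pigeonhole at least $|M|/2$ devices agree. If that common first action is $\listen$, then $f(j)$ is a $b$-$\listen$-slot for all of them, so deleting the $\listen$-slots removes $f(j)$ and leaves at most $m-1$ potential active time slots, while the energy does not increase and the easier success criterion is preserved because every deleted slot was genuinely silent. Relabeling the surviving IDs and using $|W| \geq |M|/2 \geq N/2^{k+1} \geq N/(m\cdot 2^{k+6})$ then produces $\mathcal{A}'$ on the required ID space.

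The main obstacle is the complementary case, where $|M|/2$ devices \emph{transmit} at their first potential active time slot; here deleting $\listen$-slots only erases a \emph{later} active slot, and only for devices that listen at least once, so one must discard the devices whose entire collision/silence path consists of transmissions. Bounding the number of such ``never-listen'' devices is the delicate point, because in $\strongcd$ a lone transmitter hears its own message and may \emph{diverge} from the matched behavior, which blocks the naive ``a success needs a listener'' argument that works for $\receivercd$. I expect to handle this by a common-slot pigeonhole on these transmitters together with the hypothesis $n \leq N/2^{k+2}$: forcing at least two of them to share a first transmit-slot makes its feedback a deterministic $\noise$ that can again be hardwired away, and the coarse count of at most $m$ potential active time slots per device is exactly what contributes the extra factor $m$ in the target size $N/(m\cdot 2^{k+6})$.

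Finally I would discharge the bookkeeping: the energy of $\mathcal{A}'$ is at most $k$ since we only ever remove actions; the number of potential active time slots drops from $m$ to $m-1$ because the slot $f(j)$ is removed uniformly and no new awake behavior is introduced at retained slots; and $\mathcal{A}'$ is correct on every subset of size at least $n$ because its execution coincides with that of $\mathcal{A}$ on all retained slots (the deleted slots carried deterministic $\silence$ or, in the transmit case, deterministic $\noise$ feedback), and $\mathcal{A}$ is correct on all such subsets of the original ID space.
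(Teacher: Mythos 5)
There is a genuine gap, and it sits exactly where you flagged your own unease. Your deletion of all $b$-$\listen$ slots is justified only along the single matched path (listens hear $\silence$, transmits hear $\noise$), but both the quantity the lemma must decrease (potential active time slots, \cref{def:potential_active_time_slot}) and the correctness of $\mathcal{A}'$ quantify over \emph{all} branches of the protocol tree. Running $\mathcal{A}$ on $V' \subseteq M$ stays on the matched path only until some device transmits alone: in $\strongcd$ it then hears its own message (as you yourself observe), deviates, and afterwards may transmit or listen at a \emph{deleted} slot. Hence the deleted slots are not ``genuinely silent'' in the true execution of $\mathcal{A}$, the execution of $\mathcal{A}'$ no longer simulates any execution of $\mathcal{A}$, and correctness cannot be inherited. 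The problem is acute here because after deleting all $\listen$-slots of $b$, on-path devices only transmit at retained slots, so the success slot (one transmitter plus at least one listener) that $\mathcal{A}$ guarantees can occur precisely at a deleted slot of a deviated execution. Your fallback for the transmit case also does not deliver the lemma's conclusion: hardwiring $\noise$ feedback prunes a branch of the tree, but the device still \emph{acts} at that slot, so the count of potential active time slots does not drop from $m$ to $m-1$; and the hardwiring is unsound on subsets $V'$ that contain only one of the co-located transmitters, since a lone transmitter hears its own message rather than $\noise$. A further warning sign: your argument never genuinely uses the hypothesis $n \leq \frac{N}{2^{k+2}}$.

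For contrast, the paper's proof controls activity across all branches from the start via $g_i(j)$: it defines the set $D$ of heavy slots where more than $m \cdot 2^{k+3}$ devices are potentially active (so $|D| \leq \frac{N}{2^{k+3}}$), and uses \cref{cor:time_lb_N_2_k} --- this is exactly where $n \leq \frac{N}{2^{k+2}}$ enters --- to show that at most $N/2$ IDs are potentially active only inside $D$; otherwise one could compress $\mathcal{A}$ onto the slots of $D$ and contradict the time lower bound. It then samples the remaining IDs independently with probability $\frac{1}{m \cdot 2^{k+4}}$, so that in expectation at least $\frac{N}{m \cdot 2^{k+6}}$ survivors each own a special slot outside $D$ in which they are the \emph{only} potentially active survivor. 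Idling that slot is sound on every branch and for every subset, because no other retained device can be awake there in any execution: a lone listener deterministically hears $\silence$ and a lone transmitter accomplishes nothing under the easier success criterion, so the slot carries no information and no success. That ``alone in the slot across all branches'' isolation is precisely what your matched-path compression lacks, and it is also the true source of the factor $m$ in the target ID-space size (it is the sampling rate, not the coarse per-device count of $m$ slots). Any repair of your route would need a mechanism with this property, at which point you have essentially rebuilt the paper's argument.
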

\ifdefined\isspaa
We postpone the proof to \cref{sec:pf_reduction_step}.
\else
\begin{proof}
Let $t$ denote the time complexity of $\mathcal{A}$, and let $g_i(j)$ for $i \in [t]$ and $j \in [N]$ be defined according to Definition~\ref{def:potential_active_time_slot}. 

We define $D \subseteq [t]$ as the set of time slots where more than $m \cdot 2^{k+3}$ devices are potentially active, i.e., $D$ includes all time slots $i \in [t]$ which satisfies $|\{j \in [N] ~|~ g_i(j) = 1\}| > m \cdot 2^{k+3}$. Note that we have $|D| \leq \frac{N \cdot m}{m \cdot 2^{k+3}} = \frac{N}{2^{k+3}}$ since each device has at most $m$ potential active time slots.

We define $V_1 \subseteq [N]$ as the set of devices that are only potentially active in time slots in $D$, i.e., $V_1$ includes all IDs $j \in [N]$ which satisfies $\{i \in [t] ~|~ g_i(j) = 1\} \subseteq D$.

We prove that $|V_1| \leq \frac{N}{2}$ by contradiction. Suppose $|V_1| > \frac{N}{2}$. We construct a deterministic leader election algorithm $\overline{\mathcal{A}}$ for an ID space of size $\frac{N}{2}$, and let each device run the same protocol as in the algorithm $\mathcal{A}$, while ignoring the time slots that are not in $D$. $\overline{\mathcal{A}}$ is correct on any set of devices of size at least $n$, and has energy complexity $k$ and time complexity $|D|$. Since we assumed $n \leq \frac{N/2}{2^{k+1}}$ in the lemma statement, Corollary~\ref{cor:time_lb_N_2_k} implies that $\overline{\mathcal{A}}$ satisfies $|D| \geq \frac{N/2}{2^{k+1}}$, and this contradicts with $|D| \leq \frac{N}{2^{k+3}}$ that we just proved.

Next we define a random set $V_2 \subseteq [N] \backslash V_1$ such that $V_2$ includes each $j \in [N] \backslash V_1$ independently with probability $\frac{1}{m \cdot 2^{k+4}}$. Then we define a set $V_3 \subseteq V_2$ that includes all $j \in V_2$ such that there exists a special time slot $i \in [t] \backslash D$ where $g_i(j) = 1$, and $g_i(j') = 0$ for all other $j' \in V_2 \backslash \{j\}$.

We prove that $\mathbb{E}[|V_3|] \geq \frac{N}{m \cdot 2^{k+6}}$. Consider a fixed $j \in [N] \backslash V_1$. By the definition of $V_1$ we know that there must exist at least one time slot $i \in [t] \backslash D$ such that $g_i(j) = 1$. Fix this time slot $i$, and define set $\overline{V} = \{j' \in N \backslash (V_1 \cup \{j\}) ~|~ g_i(j') = 1\}$. By the definition of $D$ and since $i \notin D$, we know that $|\overline{V}| \leq m \cdot 2^{k+3}$. So we have
\[
    \mathbb{E}_{V_2}[|\overline{V} \cap V_2|] = \frac{|\overline{V}|}{m \cdot 2^{k+4}} \leq \frac{1}{2}.
\]
Since $|\overline{V} \cap V_2|$ must be an integer, we have
$
    \Pr_{V_2}[|\overline{V} \cap V_2| = 0] \geq \frac{1}{2}.
$
Note that if $j \in V_3$, we must have $g_i(j') = 0$ for all $j' \in V_2 \backslash \{j\}$, which implies $|\overline{V} \cap V_2| = 0$, so we have
\begin{align*}
    \Pr_{V_2}[j \in V_3] = &~ \Pr_{V_2}[j \in V_3 ~|~ j \in V_2] \cdot \Pr_{V_2}[j \in V_2] \\
    \geq &~ \Pr_{V_2}[|\overline{V} \cap V_2| = 0] \cdot \Pr_{V_2}[j \in V_2]
    \geq \frac{1}{m \cdot 2^{k+5}}.
\end{align*}
Combining this bound on $\Pr_{V_2}[j \in V_3]$ and the previously proved bound $|V_1| \leq N/2$, we have
\[
\mathbb{E}_{V_2}[|V_3|] = \sum_{j \in [N] \backslash V_1} \Pr_{V_2}[j \in V_3] \geq \frac{N - |V_1|}{m \cdot 2^{k+5}} \geq \frac{N}{m \cdot 2^{k+6}}.
\]
Thus there must exist one instance of $V_3$ that satisfies $|V_3| \geq \frac{N}{m \cdot 2^{k+6}}$. 
By definition we know that for each $j \in V_3$ there exists a special time slot $i \in [t] \backslash D$ where $g_i(j) = 1$ and $g_i(j') = 0$ for all other $j' \in V_3$. Since in every special time slot only one device is potentially active, that device always gets $\silence$ so it can discard this action.

Finally we construct the new algorithm $\mathcal{A}'$ that works for an ID space of size $\frac{N}{m \cdot 2^{k+6}}$. Each device in $\mathcal{A}'$ runs the same protocols of $\mathcal{A}$ for devices in $V_3$, but it changes the action from $\{\transmit, \listen\}$ to $\idle$ in the special time slot. Thus we know that $\mathcal{A}'$ has at most $m - 1$ potential active time slots.
\end{proof}
\fi

\begin{proof}[Proof of \cref{thm:energy_lb_log_log_N_n}]
From \cref{cla:number_of_potential_active_time_slots} we know that in $\mathcal{A}$ each device has at most $2^k$ potential active time slots. We apply \cref{lem:reduction_step} 
$2^k$ times to construct algorithms $\mathcal{A}_1, \mathcal{A}_2, \ldots, \mathcal{A}_{2^k}$ where the algorithm $\mathcal{A}_{\ell}$ works for an ID space of size at least $\frac{N}{(2^{2k+6})^{\ell}}$ and each device has at most $2^k - \ell$ potential active time slots. Our goal is to prove that $n \geq \frac{N}{(2^{2k+6})^{2^k}}$. If for any intermediate $\ell < 2^k$ the condition $n \leq \frac{N}{(2^{2k+6})^{\ell} \cdot 2^{k+2}}$ is violated, then we already have the desired bound $k \geq \Omega(\log \log \frac{N}{n})$, so we can assume this condition is always satisfied, and the reduction step is always valid.

Now consider the algorithm $\mathcal{A}_{2^k}$ that works for ID space of size $\geq \frac{N}{(2^{2k+6})^{2^k}}$. Each device in $\mathcal{A}_{2^k}$ has zero potential active time slots, so this algorithm cannot be correct on any set of devices. However, we know that it is correct on set of devices of size at least $n$, thus we must have $n > \frac{N}{(2^{2k+6})^{2^k}}$, and this gives the desired bound $k \geq \Omega(\log \log \frac{N}{n})$.
\end{proof}

\section*{Acknowledgments}
We would like to thank Ruosong Wang for helpful discussions in the early stage of this work.


\bibliographystyle{alpha}
\bibliography{references}

\end{document}